\let\c@lofdepth\relax
\let\c@lotdepth\relax
\begin{document}
\title{Dynamics analysis of a novel fractional HBV system with immune delay}
\subtitle{}
\author{Xiaohe Yu\inst{1} \and Fei Gao\inst{1,2,}%
\thanks{{e-mail:} hgaofei@gmail.com}%
\and Xingluan Wang\inst{1}\and Jiangtao Zhang\inst{1}
}                     
\institute{School of Science, Wuhan University of Technology, Wuhan 430070, China \and Center for Mathematical Sciences, Wuhan University of Technology}
\date{Received: date / Revised version: date}
%
\abstract{
In the theoretical research of Hepatitis B virus, mathematical models of its transmission mechanism have been thoroughly investigated, while the dynamics of the immune process in vivo have not. At present, nearly all the existing models are based on integer-order differential equations. However, models restricted to integer-order cannot depict the complex dynamical behaviors of the virus. Since fractional-order models possess property of memory and memory is the main feature of immune response, we propose a fractional-order model of Hepatitis B with time-delay to further explore the dynamical characters of the HBV model. First, by using the Caputo fractional differential and its properties, we obtain the existence and uniqueness of the solution. Then by utilizing stability analysis of fractional-order system, the stability results of the disease-free equilibrium and the epidemic equilibrium are studied according to the value of the basic reproduction number. Moreover, the bifurcation diagram, the largest Lyapunov exponent diagram, phase diagram, Poincar\'e section and frequency spectrum are employed to confirm the chaotic characteristics of the fractional HBV model and to figure out the effects of time-delay and fractional order. Further, a theory of the asymptotical stability of nonlinear autonomous system with delay on the basis of Caputo fractional differential is proved. The results of our work illustrates the rich dynamics of fractional HBV model with time-delay, and can provide theoretical guidance to virus dynamics study and clinical practice to some extent.
\PACS{
      {05.45.-a}{Nonlinear dynamics and chaos}   \and
      {05.45.Pq}{Numerical simulation of chaotic systems}
     } 
} 
%
\maketitle
\section{Introduction}
Hepatitis B is a major disease worldwide which is highly contagious \cite{TSAI201833}. The disease mainly threatens children and young adults, and a small number of patients develop chronic infection which can convert to cirrhosis or liver cancer. According to the information published by WHO on World Hepatitis Day 2019, Hepatitis B is the second most deadly epidemic after tuberculosis, and the number of people infected with hepatitis virus is 9 times higher than that of HIV \cite{THORNLEY2008599}. However, as reported by the EASL International Liver Congress$^{TM}$ 2019, ILC 2019, hepatitis B treatment coverage is low among countries in all income strata. As of 2016, it is estimated that only 10.5\% of all people to be living with hepatitis B were aware of their infection, whereas 16.7\% of the people diagnosed were under treatment. Overall, there are mainly two treatments for chronic hepatitis B patients, i.e. treatment with a NA or with IFN$\alpha$, currently pegylated (PegIFN$\alpha$)\cite{2treatments2019,Friedman2018}. Despite of the prevalence of the use of oral medications such as tenofovir or entecavir, in most people, the treatment does not cure hepatitis B infection, but only suppresses the replication of the virus.

The prevention and treatment of hepatitis B is a great issue related to human health and livelihood. Quantitative research on the transmission rules of hepatitis B is an important basis for its prevention and treatment. Kinetics of Hepatitis B virus (HBV) is an important method for theoretical quantitative research on the prevalence of hepatitis B. Compared with traditional biostatistical methods, infectious disease dynamics methods can better reflect the mechanism of disease diffusion while enabling people to understand some global states in the epidemic process \cite{QiaoMeiHong2010}. Mathematical models can be used to make predictions, and various virtual experiments can also be performed. In recent years, the analysis of mathematical models of epidemics has received extensive attention from scholars. Through the establishment of infectious disease dynamics system researchers analyze the development of epidemics, thus determining corresponding prevention and control strategies from the obtained rules. Numerical simulation methods are applied to verify the validity of the models. The existing related literature has done abundant researches on mathematical models of HBV, especially on integer-order systems. For instance, Khan \emph{et al.} \cite{Khan2017} formulated a transmission dynamic model dividing the host population into four compartments which includes individuals infected with acute hepatitis and those who infected with chronic. Kamyad \emph{et al.} \cite{Kamyad2014} considered a SEICR epidemic model, obtaining the optimal control strategy to minimize both the number of infectious humans and the vaccination and treatment costs. In \cite{Chenar}, Chenar \emph{et al.} took the innate and adaptive immune responses into account whilst analyzing stability of various steady states. Liu Peng \cite{LiuPeng2013WangWenDi} established a model to explore the influence of age structure and vaccination of HBV infection. Chaotic behaviors of an integer-order HBV model are demonstrated in \cite{WangKaiFaMain} through theoretical and numerical studies. Investigation of the stability of equilibria and persistence of the contagion by the basic reproductive number $\mathscr{R}_{0}$ can be found in \cite{JingZhangOptimalCVT}. The work mentioned above reveal HBV's pathogenesis and epidemic regularity to some extent, but all of them are restricted to integer-order systems. 

Fractional-order system not only extends the traditional integer-order system theory, but also is capable of better describing many real-world processes. Scholars have pointed out that systems can be characterized more accurately by fractional-order differential equations. The utilization of fractional calculus in physics and engineering has been greatly developed. In paper \cite{ReviewST}, the authors generalized the recent trends in theory and applications of fractional dynamics.  Presenting a new fractional-order inductive transducer, Veeraian \emph{et al.} \cite{Veeraian} expounded the impact of fractional-order parameters on the performance of inductive transducer. Article \cite{RAHIMKHANI20168087} defined a new fractional-order Bernoulli wavelets then expanded the application of the proposed wavelets for solving fractional differential systems. Provided that fractional-order models possess virtues of memory, non-local and non-singular kernel, more recently they are gradually utilized to mimic biological processes which contain memory phenomena such as epidemiological models. In article \cite{Angstmann2016}, Angstmann \emph{et al.} obtained a fractional-order recovery SIR model based on the derivation from a stochastic process. 

By introducing a fractional order into MINMOD Millennium. Cho \emph{et al.} \cite{CHO201536} interpreted the rheological behavior of glucose and insulin to estimate insulin sensitivity. From the view of signal processing techniques, Machado \cite{MACHADO20154095} applying fractional calculus to describe the essential characteristics of DNA. Employing the Gr{\"u}nwald-Letnikov fractional differential definition, Xu \cite{QPSO} introduced a new approach to enhance the global search ability of quantum-behaved particle swarm optimization(QPSO). In paper \cite{GaoFei2016FWAS}, the authors studied a fractional Willis aneurysm system and used two methods to control the chaotic system. Rosa and Torres \cite{ROSA2018142} established a nonlocal fractional-order model of an implemented human respiratory syncytial virus surveillance system and solved an optimal treatment control problem. By applying fractional differential, the cited work mimicked biological processes and confirmed to achieve valuable results. 

As a matter of fact, fractional HBV models have not yet been studied extensively. To interpret the spread of HBV, Shi \emph{et al.} \cite{Owo} considered a model from the perspective of fractional-in-space reaction{\textendash}diffusion equations. Substituting the bilinear incidence rate with Holling type-II functional response, paper \cite{Holling} analyzed the control of the disease dissemination. A research on a HBV model using the Caputo-Fabrizio derivative was carried out by Ullah \cite{Ullah2018CF}. Later the same model employing the Atangana-Baleanu derivative was investigated in \cite{Ullah2018AB}. Motivated by the work above, in the present paper, we propose a fractional-order HBV model with immune delay using Caputo fractional differential.

When working with time related phenomena, time-delay is an inevitable consideration. In engineering practice, although the amount of time-delay is small, it often affects the stability of the entire system. Therefore, under this circumstances, the effect of time-delay on dynamics system is an unavoidable factor. Gao first researched the fractional Willis aneurysm system in \cite{GaoFei2016FWAS}, then compared the aforementioned system with the one including time-delay factors \cite{GaoFei2018delay}, and validated the effectiveness of the time-delay system. When exploring the impulsive fractional differential systems, Nieto \cite{priceFluctuations} introduced delay to model price fluctuations, while Stamov \cite{Stamov2017early} investigated a fractional neural networks with time-varying delays. The basic feature of a time-delay dynamic system is that the evolution of the system over time depends not only on the current state of the system, but also on its past state. During the invasion of virus, there are various time delays in the process of infection of susceptible cells and reactions of immune defense. Hence, the introduction of time-delay will be more in line with the real situations of viral infection and immune response, which is more conducive to understanding the course of infection thus achieving the purpose of prevention and treatment. Compared with system without time-delay, the infinite dimensional dynamic system described by delay differential equations has crucial differences in many aspects. Showing complicated phenomena, the dynamic behavior of fractional delay systems is not completely the same with the one without delay. Simple time-delay differential systems can have rather complex dynamical properties. Time-delay often leads to instability of the system and generates various forms of bifurcations. To observe the bifurcation and chaos phenomena, scholars employ numerical experiments by drawing phase portraits of the system \cite{COMPLEXITY,Abdelaziz2018,LocalRosslerwithnodelay}, bifurcation plots with certain parameters \cite{JI2018352,Abdelaziz2018,Ding2017EPJfracbifchao}, curves of time course \cite{LocalRosslerwithnodelay,Gao_2019AB}, largest Lyapunov exponent diagram \cite{GaoFei2018delay,Abdelaziz2018}, power spectrum \cite{Ding2017EPJfracbifchao} etc. to investigate fractional-order models. Nevertheless, none of the existing fractional-order HBV model has interpreted its instable state from the point of bifurcation and chaos or expounded its chaotic characteristics. As a result, in this paper, we introduce an essential delay in the course of cellular immunity and utilize dynamical analysis techniques to study its impact on the system stability, and further discuss the chaotic properties of the proposed system.

Based on one of the most frequently used fractional differential definitions, Caputo differential, we propose a fractional HBV system with immune delay. In the present paper, we first introduce some definitions and properties of the Caputo derivative. In sect. 2, we select a typical also well-studied integer-order HBV model then fractionalize the order of the system. Next we show the existence and uniqueness of the system solution, and analyze the stability of the equilibria in sect. 3. Sect. 4 demonstrates numerical simulation, investigates the effect of time-delay and fractional-order on the HBV model. Moreover, we illustrates the complex dynamical behavior and the chaotic characteristics of the Caputo fractional-order system, confirming that it possesses the similar chaotic properties as the integer-order system does. In sect. 5, regarding to the derived model, we prove an asymptotical stability theory of Caputo fractional system with fixed time delay. Finally, we conclude our work in sect. 6.
\section{Preliminaries of the Caputo fractional-order derivative}
\begin{definition}
Let $\alpha$  be a positive real number, $n-1\leq \alpha < n $ , $n \in \mathrm{N}_{+}$. The function $f$ is defined on the interval $[a, b]$. Then the Caputo fractional derivative of order $\alpha$ is defined by
\begin{equation*}
\label{equ1}
  _{t_{0}}^{c} D_{t}^{\alpha} f(t)=\frac{1}{\Gamma(n-\alpha)} \int_{t_{0}}^{t} \frac{f^{(n)}(\tau)}{(t-\tau)^{\alpha-n+1}} \mathrm{d} \tau, t \in[a, b],
\end{equation*}
where $\alpha$  is the order of the derivative and $\Gamma(z)$ denotes the Gamma function.
\end{definition}
\begin{theorem}\label{TH1}
	\cite{HuoRan2014TH} Given system with time-delay $\tau$  and initial values
\begin{equation*}
\label{equ2}
\left\{ \begin{array}{l}
\vspace{5pt}
{}_{{t_0}}^CD_t^\alpha x(t) = f(t,x(t),x(t - \tau )),t \ge {t_0},\\
{x^{(k)}}(t) = {\varphi _k}(t),{t_0} - \tau  \le t \le {t_0},
\end{array} \right.
\end{equation*}
if $f$  is continuous in a neighborhood of $\left(t_{0}, \varphi\left(t_{0}\right), x\left(t_{0}-\tau\right)\right)$  and is Lipschitz continuous for all variables expect $t$, $\varphi_{k}(t) \in C\left[t_{0}-\tau, t_{0}\right]$  , then the system has a unique continuous solution on $t_{0} \leq t \leq t_{0}+h$ ,where h is sufficiently small.
\end{theorem}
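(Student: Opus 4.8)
The plan is to convert the delayed fractional initial value problem into an equivalent Volterra integral equation and then apply the Banach fixed point theorem on a space of continuous functions. Since $n-1 \le \alpha < n$, applying the Riemann--Liouville fractional integral of order $\alpha$ to ${}_{t_0}^C D_t^\alpha x(t) = f(t,x(t),x(t-\tau))$ and using the initial values $x^{(k)}(t_0) = \varphi_k(t_0)$ transforms the problem, for $t \ge t_0$, into
\begin{equation*}
x(t) = \sum_{k=0}^{n-1}\frac{(t-t_0)^k}{k!}\,\varphi_k(t_0) + \frac{1}{\Gamma(\alpha)}\int_{t_0}^{t}(t-s)^{\alpha-1} f\bigl(s,x(s),x(s-\tau)\bigr)\,\mathrm{d}s .
\end{equation*}
A function continuous on $[t_0-\tau,\,t_0+h]$ that coincides with the prescribed data on $[t_0-\tau,t_0]$ and satisfies this identity on $[t_0,t_0+h]$ is exactly a continuous solution of the original system, so it suffices to produce a unique such function.

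First I would take $h \le \tau$; this is essentially why the statement only claims $h$ ``sufficiently small''. Then for $s\in[t_0,t_0+h]$ the delayed argument lies in $[t_0-\tau,t_0]$, so $x(s-\tau)=\varphi_0(s-\tau)$ is a \emph{known} continuous function and the nonlinearity collapses to $g(s,y):=f(s,y,\varphi_0(s-\tau))$, which is still continuous near $(t_0,\varphi_0(t_0))$ and Lipschitz in $y$ with the same constant $L$. Thus the delay drops out of the fixed-point analysis and what remains is a standard Caputo fractional IVP.

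Next, write $x_0(t)=\sum_{k=0}^{n-1}\frac{(t-t_0)^k}{k!}\varphi_k(t_0)$, choose $b>0$, let $M$ bound $|g|$ on $\{(s,y): t_0\le s\le t_0+h,\ |y-x_0(t_0)|\le b\}$, and let $B$ be the closed ball of radius $b$ about $x_0$ in $C[t_0,t_0+h]$ with the sup norm. Define $(Tx)(t)=x_0(t)+\frac{1}{\Gamma(\alpha)}\int_{t_0}^{t}(t-s)^{\alpha-1}g(s,x(s))\,\mathrm{d}s$. Using $\int_{t_0}^{t}(t-s)^{\alpha-1}\,\mathrm{d}s=(t-t_0)^\alpha/\alpha\le h^\alpha/\alpha$ one gets $\|Tx-x_0\|_\infty\le M h^\alpha/\Gamma(\alpha+1)$, so $T(B)\subseteq B$ for $h$ small, and $\|Tx_1-Tx_2\|_\infty\le \frac{L h^\alpha}{\Gamma(\alpha+1)}\|x_1-x_2\|_\infty$, a contraction once $h^\alpha<\Gamma(\alpha+1)/L$. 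Shrinking $h$ to meet $h\le\tau$, the self-mapping bound, and the contraction bound simultaneously, the Banach fixed point theorem furnishes a unique $x^\ast\in B$ with $Tx^\ast=x^\ast$; pasting it onto the initial data yields the desired unique continuous solution on $[t_0,t_0+h]$.

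The step needing the most care is controlling the weakly singular kernel $(t-s)^{\alpha-1}$: integrability is fine because $\alpha-1>-1$, but continuity of $t\mapsto(Tx)(t)$ must be obtained by the usual splitting of the integral near $s=t$ combined with dominated convergence, and one must also check that the integral-equation solution is regular enough that applying ${}_{t_0}^{C}D_t^\alpha$ returns the original equation — this is where continuity of $f$ and $\varphi_k\in C[t_0-\tau,t_0]$ enter. Global uniqueness on $[t_0,t_0+h]$ beyond the ball $B$ then follows either from a fractional Gronwall inequality applied to $|x_1(t)-x_2(t)|$ or simply from the observation that every continuous solution must lie in $B$ once $h$ is small.
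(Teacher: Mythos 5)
The paper does not prove this statement: Theorem~\ref{TH1} is quoted verbatim from the cited reference \cite{HuoRan2014TH} and used as a black box, so there is no in-paper proof to compare against. Your argument --- restricting to $h\le\tau$ so the delayed argument collapses to the known history function (the method of steps), rewriting the Caputo problem as the equivalent weakly singular Volterra equation, and running a Banach contraction with the bound $Lh^{\alpha}/\Gamma(\alpha+1)<1$ --- is the standard proof of exactly this kind of local existence--uniqueness result and is essentially correct; the two points that genuinely need care are the ones you already flag, namely the equivalence of the integral and differential formulations under the Caputo definition (which requires the regularity discussion for general $n-1\le\alpha<n$, though the paper only ever uses $0<\alpha<1$) and the continuity of $t\mapsto(Tx)(t)$ across the singular kernel.
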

\begin{lemma}\label{lemma1}
	\cite{AHMED2007542} $E$ is locally asymptotic stable if all the eigenvalues $\lambda$ of the Jacobian matrix $J(E)$ at the equilibrium point satisfy $|\arg (\lambda)|>\dfrac{\alpha \pi}{2}$.\\
\end{lemma}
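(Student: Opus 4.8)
The plan is to reduce the nonlinear stability question to a spectral condition via linearization, and then to read off the decay of the linearized flow from the asymptotics of the Mittag--Leffler function. Assume we are dealing with the autonomous Caputo system ${}_{t_0}^{c}D_{t}^{\alpha}x(t)=f(x(t))$ with $f(E)=0$ and $0<\alpha<1$. First I would set $\xi(t)=x(t)-E$ and Taylor-expand, obtaining ${}_{t_0}^{c}D_{t}^{\alpha}\xi(t)=J(E)\xi(t)+g(\xi(t))$, where $J(E)=Df(E)$ is the Jacobian at $E$ and $\|g(\xi)\|=o(\|\xi\|)$ as $\xi\to 0$. The linear part ${}_{t_0}^{c}D_{t}^{\alpha}\xi=J(E)\xi$ is solved explicitly by the matrix Mittag--Leffler function, $\xi(t)=E_{\alpha}\big((t-t_0)^{\alpha}J(E)\big)\xi(t_0)$, which one verifies either from the Laplace-transform identity $\mathcal{L}\{{}_{t_0}^{c}D_{t}^{\alpha}\xi\}(s)=s^{\alpha}\Xi(s)-s^{\alpha-1}\xi(t_0)$, giving $\Xi(s)=(s^{\alpha}I-J(E))^{-1}s^{\alpha-1}\xi(t_0)$, or by termwise fractional differentiation of the series.

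The analytic heart of the argument is the growth/decay dichotomy of $E_{\alpha}$. Passing to the Jordan form of $J(E)$, the behaviour of each mode is governed by $E_{\alpha}\big(\lambda(t-t_0)^{\alpha}\big)$, with polynomial-in-$t$ corrections on nontrivial Jordan blocks, where $\lambda$ runs over the eigenvalues of $J(E)$. I would then invoke the classical asymptotic expansions of the Mittag--Leffler function: for $|\arg z|\le\tfrac{\alpha\pi}{2}$ there is a dominant term of the form $\tfrac{1}{\alpha}\exp(z^{1/\alpha})$, whereas for $\tfrac{\alpha\pi}{2}<|\arg z|\le\pi$ one has $E_{\alpha}(z)=-\sum_{k=1}^{N}\frac{z^{-k}}{\Gamma(1-\alpha k)}+O(|z|^{-N-1})$, so $E_{\alpha}(z)\to 0$ like $|z|^{-1}$. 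Since $\arg\big(\lambda(t-t_0)^{\alpha}\big)=\arg(\lambda)$ for $t>t_0$, the hypothesis $|\arg(\lambda)|>\tfrac{\alpha\pi}{2}$ for every eigenvalue is exactly what forces each mode, hence $\|E_{\alpha}((t-t_0)^{\alpha}J(E))\|$, to decay (algebraically, like $(t-t_0)^{-\alpha}$), so the zero solution of the linearized system is asymptotically stable. The equivalence with the location of the roots $s$ of $\det(s^{\alpha}I-J(E))=0$ inside the sector $|\arg s|>\tfrac{\alpha\pi}{2}$ follows from the substitution $s^{\alpha}=\lambda$, which is the fractional counterpart of Matignon's criterion.

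Finally I would upgrade linear stability to local nonlinear stability. Using the fractional variation-of-constants formula $\xi(t)=E_{\alpha}((t-t_0)^{\alpha}J(E))\xi(t_0)+\int_{t_0}^{t}(t-s)^{\alpha-1}E_{\alpha,\alpha}((t-s)^{\alpha}J(E))\,g(\xi(s))\,ds$, together with the bounds $\|E_{\alpha}((t-t_0)^{\alpha}J(E))\|\le\frac{M}{1+(t-t_0)^{\alpha}}$ and a companion estimate on the two-parameter kernel $E_{\alpha,\alpha}$, and the smallness $\|g(\xi)\|\le\varepsilon\|\xi\|$ on a neighbourhood of $0$, one runs a weighted-norm contraction or Gronwall argument on $\sup_{t\ge t_0}(1+(t-t_0)^{\alpha})\|\xi(t)\|$ to conclude that small initial data remain small and tend to $0$. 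The main obstacle is precisely this transfer: the kernel $(t-s)^{\alpha-1}E_{\alpha,\alpha}(\cdot)$ is weakly singular, and its convolution with a linear flow that decays only algebraically (not exponentially) must be controlled carefully, since the usual exponential-dichotomy Gronwall estimates are unavailable; closing the estimate uniformly in $t$ across the borderline integrability is where the genuine work lies. This last step is the one carried out in the cited reference \cite{AHMED2007542}.
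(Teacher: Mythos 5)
First, a point of orientation: the paper does not prove this lemma at all --- it is imported verbatim from \cite{AHMED2007542}, so there is no in-paper argument to compare yours against. Your sketch reconstructs the standard route behind that citation: linearize at $E$, solve the linear Caputo system by the matrix Mittag--Leffler function, and read off decay from the sectorial asymptotics of $E_{\alpha}$ (exponential behaviour for $|\arg z|\le\tfrac{\alpha\pi}{2}$, algebraic decay $\sim -z^{-1}/\Gamma(1-\alpha)$ for $|\arg z|>\tfrac{\alpha\pi}{2}$), which is Matignon's criterion. That part is correct in outline, including the observation that $\arg\big(\lambda(t-t_0)^{\alpha}\big)=\arg\lambda$, and the strict inequality in the hypothesis keeps you off the oscillatory boundary case $|\arg z|=\tfrac{\alpha\pi}{2}$. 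The Jordan-block remark should be stated more carefully (the corrections are of the form $t^{k\alpha}E_{\alpha}^{(k)}(\lambda t^{\alpha})$ rather than literal polynomials in $t$, and one must check these still decay), but that is routine.

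The genuine gap is the one you yourself flag and then defer: the passage from linear to local nonlinear asymptotic stability. Writing down the variation-of-constants formula and announcing a weighted-norm contraction is not yet a proof; the estimate on $\sup_{t\ge t_0}(1+(t-t_0)^{\alpha})\|\xi(t)\|$ has to be closed against the weakly singular, only algebraically decaying kernel $(t-s)^{\alpha-1}E_{\alpha,\alpha}\big((t-s)^{\alpha}J(E)\big)$, which requires verifying that $\int_{0}^{\infty}u^{\alpha-1}\big\|E_{\alpha,\alpha}(u^{\alpha}J(E))\big\|\,du<\infty$ under the spectral hypothesis and then shrinking the neighbourhood so that $\varepsilon$ times this integral is less than $1$. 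Moreover, you cannot discharge this step onto \cite{AHMED2007542}: that reference states the criterion (crediting Matignon) but does not prove the linearized-stability theorem either; a complete proof of the nonlinear statement you are invoking only appears in later work on fractional linearization theorems. So either carry out the weighted-norm fixed-point argument explicitly, or restrict the claim to the linearized system. A secondary caveat: as stated the lemma concerns eigenvalues of a constant Jacobian, whereas the paper later applies it to a characteristic function containing $e^{-\sigma\tau}$; your proof, like the lemma itself, does not cover that delayed setting.
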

\begin{lemma} \label{lemma2}
	\cite{DUARTEMERMOUD2015650} Let $x(t) \in R^{n}$ be a vector of differentiable functions. Then, for any time instant $t \geq t_{0}$ , the following relationship holds
	\begin{equation*}
	\dfrac{1}{2}{}_{{t_0}}^{}D_t^\alpha ({x^T}(t)Px(t)) \le {x^T}(t)P{}_{{t_0}}^{}D_t^\alpha x(t), \forall \alpha \in(0,1], \forall t \geq t_{0}
	\end{equation*}   
	where $P \in R^{n \times n}$ is a constant, square, symmetric and positive definite matrix.
\end{lemma}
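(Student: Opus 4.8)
The plan is to reduce the matrix inequality to a one-dimensional one and then prove the scalar version by an integration-by-parts argument on the Caputo integral. Since $P$ is symmetric and positive definite, write $P=Q^{T}\Lambda Q$ with $Q$ orthogonal and $\Lambda=\mathrm{diag}(\lambda_{1},\dots,\lambda_{n})$, all $\lambda_{i}>0$, and put $y(t)=Qx(t)$. As $Q$ is constant and ${}_{t_{0}}^{c}D_{t}^{\alpha}$ is a linear operator, one has $x^{T}(t)Px(t)=\sum_{i=1}^{n}\lambda_{i}y_{i}^{2}(t)$, $x^{T}(t)P\,{}_{t_{0}}^{c}D_{t}^{\alpha}x(t)=\sum_{i=1}^{n}\lambda_{i}y_{i}(t)\,{}_{t_{0}}^{c}D_{t}^{\alpha}y_{i}(t)$ and ${}_{t_{0}}^{c}D_{t}^{\alpha}\big(x^{T}(t)Px(t)\big)=\sum_{i=1}^{n}\lambda_{i}\,{}_{t_{0}}^{c}D_{t}^{\alpha}\big(y_{i}^{2}(t)\big)$. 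Hence it suffices to establish, for a single differentiable scalar function $y$, the inequality $\tfrac12\,{}_{t_{0}}^{c}D_{t}^{\alpha}\big(y^{2}(t)\big)\le y(t)\,{}_{t_{0}}^{c}D_{t}^{\alpha}y(t)$; the vector statement then follows by taking the $\lambda_{i}$-weighted sum of the scalar inequalities.

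For the scalar step I would treat $\alpha\in(0,1)$ explicitly, the endpoint $\alpha=1$ reducing to the elementary identity $\tfrac12\frac{d}{dt}y^{2}=y\dot y$. Substituting Definition 1 with $n=1$ into both sides and subtracting, the two Caputo integrals combine into
\[
y(t)\,{}_{t_{0}}^{c}D_{t}^{\alpha}y(t)-\tfrac12\,{}_{t_{0}}^{c}D_{t}^{\alpha}\big(y^{2}(t)\big)=\frac{1}{\Gamma(1-\alpha)}\int_{t_{0}}^{t}\frac{\big(y(t)-y(\tau)\big)\,\dot y(\tau)}{(t-\tau)^{\alpha}}\,\mathrm{d}\tau .
\]
Introducing $z(\tau)=y(t)-y(\tau)$, so that $\dot y(\tau)=-\dot z(\tau)$ and $\big(y(t)-y(\tau)\big)\dot y(\tau)=-\tfrac12\frac{d}{d\tau}z^{2}(\tau)$, the right-hand side becomes $-\frac{1}{2\Gamma(1-\alpha)}\int_{t_{0}}^{t}(t-\tau)^{-\alpha}\frac{d}{d\tau}z^{2}(\tau)\,\mathrm{d}\tau$.

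The core of the argument is an integration by parts in this last integral with $u=(t-\tau)^{-\alpha}$ and $\mathrm{d}v=\frac{d}{d\tau}z^{2}(\tau)\,\mathrm{d}\tau$, which produces the boundary term $\big[(t-\tau)^{-\alpha}z^{2}(\tau)\big]_{t_{0}}^{t}$ together with $-\alpha\int_{t_{0}}^{t}(t-\tau)^{-\alpha-1}z^{2}(\tau)\,\mathrm{d}\tau$. The main obstacle is the behaviour at the upper limit $\tau\to t$, where the kernel is singular: here one uses precisely the hypothesis that $x$ (hence $y$) is differentiable, since then $z(\tau)=y(t)-y(\tau)=O(t-\tau)$, so $(t-\tau)^{-\alpha}z^{2}(\tau)=O\big((t-\tau)^{2-\alpha}\big)\to0$ as $\tau\to t$ because $\alpha<2$; the same estimate shows the improper integrals converge, legitimising the manipulation. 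At the lower limit the boundary term equals $-(t-t_{0})^{-\alpha}\big(y(t)-y(t_{0})\big)^{2}$. Collecting the pieces yields the exact identity
\[
y(t)\,{}_{t_{0}}^{c}D_{t}^{\alpha}y(t)-\tfrac12\,{}_{t_{0}}^{c}D_{t}^{\alpha}\big(y^{2}(t)\big)=\frac{1}{2\Gamma(1-\alpha)}\left[\frac{\big(y(t)-y(t_{0})\big)^{2}}{(t-t_{0})^{\alpha}}+\alpha\int_{t_{0}}^{t}\frac{\big(y(t)-y(\tau)\big)^{2}}{(t-\tau)^{\alpha+1}}\,\mathrm{d}\tau\right].
\]
Since $\Gamma(1-\alpha)>0$ for $\alpha\in(0,1)$ and every term inside the bracket is nonnegative, the scalar inequality follows, and with it the lemma.
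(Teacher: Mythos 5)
Your proposal is correct: the reduction of the quadratic form to scalar inequalities via the orthogonal diagonalization $P=Q^{T}\Lambda Q$ is sound, and the integration-by-parts identity expressing $y(t)\,{}_{t_{0}}^{c}D_{t}^{\alpha}y(t)-\tfrac12\,{}_{t_{0}}^{c}D_{t}^{\alpha}(y^{2}(t))$ as a manifestly nonnegative quantity checks out, including the treatment of the singular endpoint $\tau\to t$. The paper itself offers no proof of this lemma (it is quoted from the reference of Duarte-Mermoud et al.), and your argument is essentially the standard one from that source, so nothing further is needed.
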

\section{Fractional HBV model with immune delay}
We first introduce the classical integer-order model of HBV \cite{WangKaiFaMain}
\vspace{5pt}
\begin{equation}\label{E1}
  \left\{\begin{array}{l}
  \vspace{5pt}
  {x^{\prime}(t)=\lambda-d x(t)-\beta x(t) y(t)}, \\
  \vspace{5pt}
  {y^{\prime}(t)=\beta x(t) y(t)-a y(t)-q y(t) z(t)}, \\
  {z^{\prime}(t)=c y(t-\tau)-b z(t)},\end{array}\right.
\end{equation}
where $x(t)$ denotes the number of susceptible cells, $y(t)$ denotes the number of HBV population, and $z(t)$ denotes the number of cytotoxic T lymphocytes (CTL). Susceptible host cells are produced at a rate $\lambda$, die at a rate $dx$ and get infected by the virus at a rate $\beta xy$. Infected cells' natural death rate is $ay$ and are killed by CTL reaction at a rate $pyz$, corresponding to the lysis function of CTL response. The activation rate of the CTL response is proportional to the number of infected cells at a previous time, $cy(t-\tau)$, where $\tau$ is the time delay of CTL response. The response also decays at an exponential rate proportional to its current strength, $bz$. Owing to the evidence that CTLs play the key role in the course of immunization by attacking virus-infected cells, we reserve the original time delay, namely the delay caused by the previous population of the CTLs, then apply Caputo fractional differential operator to propose a novel fractional HBV model
\begin{equation}\label{E2}
\left\{ \begin{array}{l}
\vspace{5pt}
{}_{{t_0}}^CD_t^\alpha x(t) = \lambda  - dx(t) - \beta x(t)y(t),\\
\vspace{5pt}
{}_{{t_0}}^CD_t^\alpha y(t) = \beta x(t)y(t) - ay(t) - qy(t)z(t),\\
\vspace{5pt}
{}_{{t_0}}^CD_t^\alpha z(t) = cy(t - \tau ) - bz(t),\\
0 < \alpha  < 1.
\end{array} \right.
\end{equation}
\subsection{Existence and uniqueness of the solution}
\begin{theorem}\label{TH2}
	The system with initial value condition is constructed as follows:
	\begin{equation*}\label{31}
	\left\{\begin{array}{c}
	\vspace{5pt}
	{_{0} D_{t}^{\alpha} X(t)=F(t)+F_{1}(X(t-\tau))+\mathrm{F}_{2}(X(t))}, \\
	{X(0)=X_{0}}.\end{array}\right.
	\end{equation*}
	where
	\begin{equation*}\label{32}
	X(t)=(x(t), y(t), z(t))^{2},
	\end{equation*}
	\begin{equation*}\label{33}
	X(t-\tau)=(x(t-\tau), y(t-\tau), z(t-\tau))^{T},
	\end{equation*}
	\begin{equation*}\label{34}
	X_{0}=\left(x_{0}, y_{0}, z_{0}\right)^{T},
	\end{equation*}
	\begin{equation*}\label{35}
	F(t)=\left( \begin{array}{l}{\lambda} \\ {0} \\ {0}\end{array}\right),
	\end{equation*}
	\begin{equation*}\label{36}
	F_{1}(X(t-\tau))=\left( {\left( {\begin{array}{*{20}{c}}
			0&0&0\\
			0&0&0\\
			0&c&0
			\end{array}} \right)X(t - \tau )} \right) = \left( \begin{array}{c}
	0\\
	0\\
	cy(t - \tau )
	\end{array} \right),
	\end{equation*}
	\begin{equation*}\label{37}
	{{\rm{F}}_2}{\rm{(X(t)) = }}\left( {\left( {\begin{array}{*{20}{c}}
			{ - d}&{ - \beta x(t)}&0\\
			{\beta y(t)}&{ - a}&{ - qy(t)}\\
			0&0&{ - b}
			\end{array}} \right)X(t)} \right) = \left( \begin{array}{c}
	- dx(t) - \beta x(t)y(t)\\
	\beta x(t)y(t) - ay(t) - qy(t)z(t)\\
	{\rm{ - b z(t)}}
	\end{array} \right),
	\end{equation*}
	then the system with initial value conditions has a unique solution. 
\end{theorem}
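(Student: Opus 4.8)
The plan is to cast the system of Theorem~\ref{TH2} into the precise form treated in Theorem~\ref{TH1} and then invoke that result. Setting $f(t,X(t),X(t-\tau)) := F(t)+F_1(X(t-\tau))+F_2(X(t))$, the system becomes ${}_{0}^{C}D_t^{\alpha}X(t)=f(t,X(t),X(t-\tau))$ with the vector unknown $X$ in the role of the scalar $x$ of Theorem~\ref{TH1}, the initial history taken to be the constant $X_0$ on $[-\tau,0]$ (so $\varphi_k\in C[-\tau,0]$), and $t_0=0$. It therefore suffices to verify the two hypotheses of Theorem~\ref{TH1}: that $f$ is continuous in a neighborhood of $(0,X_0,X_0)$, and that $f$ is Lipschitz continuous in every argument except $t$.

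Continuity is immediate: $F$ is the constant vector $(\lambda,0,0)^{T}$, $F_1$ applies a fixed matrix to $X(t-\tau)$, and the components of $F_2(X)$ are polynomials of degree at most two in the components of $X$, so $f$ is continuous on $\mathbb{R}\times\mathbb{R}^{3}\times\mathbb{R}^{3}$. For the Lipschitz requirement, the constant term $F$ drops out of every difference, and linearity gives $\|F_1(U)-F_1(V)\|=|c|\,|u_2-v_2|\le|c|\,\|U-V\|$, so $F_1$ is globally Lipschitz with constant $|c|$. The only term needing attention is $F_2$, whose bilinear products $\beta xy$ and $qyz$ fail to be globally Lipschitz on $\mathbb{R}^{3}$ but are locally Lipschitz.

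To make this quantitative I would restrict the state arguments to a bounded set $B$ containing $X_0$, with $\|X\|\le M$ on $B$, and split each component difference of $F_2$ via identities such as $\beta x_1y_1-\beta x_2y_2=\beta x_1(y_1-y_2)+\beta y_2(x_1-x_2)$ and $qy_1z_1-qy_2z_2=qy_1(z_1-z_2)+qz_2(y_1-y_2)$. Bounding every surviving factor by $M$ gives $\|F_2(X_1)-F_2(X_2)\|\le L\,\|X_1-X_2\|$ for $X_1,X_2\in B$, with $L=L(d,a,b,\beta,q,M)$ explicit. Hence $f$ is Lipschitz in both $X(t)$ and $X(t-\tau)$ on $B$, uniformly in $t$, and Theorem~\ref{TH1} delivers a unique continuous solution on $[0,h]$ for $h$ small, which is the assertion.

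The step I expect to demand the most care is arranging the $F_2$ estimate so that $L$ is genuinely uniform on the chosen neighborhood, and, if one wishes to upgrade local existence to a solution on all of $[0,\infty)$, establishing that the biologically feasible region $\Omega=\{(x,y,z):x,y,z\ge 0,\ x+y\le\lambda/\min\{d,a\},\ z\le c\lambda/(b\min\{d,a\})\}$ is positively invariant. Non-negativity follows because each vector-field component is non-negative on the corresponding coordinate face; the bounds follow by adding the first two equations to obtain ${}_{0}^{C}D_t^{\alpha}(x+y)\le\lambda-\min\{d,a\}(x+y)$ and then treating the third equation, in each case invoking the Caputo comparison principle (for which Lemma~\ref{lemma2}, applied to an appropriate quadratic functional, or the Mittag--Leffler representation of the solution supplies the needed tool). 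On the compact invariant set $\Omega$ the constant $L$ does not deteriorate, so the standard continuation argument extends the unique solution to $[0,\infty)$.
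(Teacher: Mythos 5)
Your proposal follows essentially the same route as the paper: decompose the right-hand side into $F$, $F_1$, $F_2$, show $F_1$ is globally Lipschitz with constant $c$ and $F_2$ is Lipschitz on a bounded neighborhood of the initial data, and then invoke Theorem~\ref{TH1} to obtain local existence and uniqueness. In fact your treatment of the bilinear terms $\beta xy$ and $qyz$ (splitting the differences and bounding the surviving factors by $M$) is more careful than the paper's, which writes the Lipschitz constant as $\beta+q$ and only implicitly appeals to boundedness of the state; your additional remarks on positive invariance and continuation go beyond what the paper proves but are not needed for the stated claim.
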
                                                                                          
\begin{proof}
	Take $|\cdot|$ and $||\cdot||$ as vector norm and matrix norm, respectively. Let $G(t,X(t))=F(t)+F_{1}(X(t-\tau))+F_{2}(X(t))$\quad$. For any \sigma>0,  \quad\left[x_{0}-\sigma, x_{0}+\sigma\right]$ is continuous and bounded. For $\forall X(t), Y(t) \in\left[x_{0}-\sigma, x_{0}+\sigma\right]$, there is
	\begin{equation*}\label{38}
	|G(t, X(t))-G(t, Y(t))| \leq\left|F_{1}(X(t-\tau))-F_{1}(Y(t-\tau))\right|+\left|F_{2}(X(t))-F_{2}(Y(t))\right|. 
	\end{equation*}
	As for $\left|F_{2}(X(t))-F_{2}(Y(t))\right|$,
	there exists
	\begin{align}
	\left| {{F_2}(X(t)) - {F_2}(Y(t))} \right|& = {\left|\left( {\begin{array}{*{20}{c}}
			0&{ - \beta {x_1}(t) + \beta {x_2}(t)}&0\\
			\vspace{5pt}
			{\beta {y_1}(t) - \beta {y_2}(t)}&0&{ - q{y_1}(t) + q{y_2}(t)}\\
			0&0&0
			\end{array}} \right)\right|}\notag\\
	&\le \beta \left| {{y_1}(t) - {y_2}(t)} \right| + \beta \left| {{x_1}(t) - {x_2}(t)} \right| + q\left| {{y_1}(t) - {y_2}(t)} \right|\notag\\
	& \le (\beta  + q)\left| {{y_1}(t) - {y_2}(t)} \right| + (\beta  + q)\left| {{x_1}(t) - {x_2}(t)} \right|\notag\\
	& = (\beta  + q)|X(t) - Y(t)|.
	\end{align}
	Let $L_{1}=\beta+q $. Since $\left[x_{0}-\sigma, x_{0}+\sigma\right]$ is continuous and bounded, $\beta, q$ are parameters within certain ranges, thereby $L_1$ is a constant. 
	Similarly, we have $\left|F_{2}(X(t))-F_{2}(Y(t))\right| \leq L_{1}|X(t)-Y(t)|$ for $| F_{2}(X(t))-F_{2}(Y(t)) |$.\\
	As for $\left|F_{1}(X(t-\tau))-F_{1}(Y(t-\tau))\right|$, there is
	\begin{align}
	\vspace{5pt}
	{F_1}(X(t - \tau )) - {F_1}(Y(t - \tau )) &= \left|\left( {\begin{array}{*{20}{c}}
		\vspace{5pt}
		0\\
		0\\
		{c\left( {{y_1}(t - \tau ) - {y_2}(t - \tau )} \right)}
		\end{array}} \right)\right|\notag\\
	&= c\left| {{y_1}(t - \tau ) - {y_2}(t - \tau )}\right|\notag\\
	&\le c|X(t - \tau ) - Y(t - \tau )|.
	\end{align}
	Let $L_2=c$. Likewise,
	\begin{equation*}\label{hgfsd}
	\left|F_{1}(X(t-\tau))-F_{1}(Y(t-\tau))\right| \leq L_{2}|X(t)-Y(t)|.
	\end{equation*}
	Let $L=\max \left\{L_{1}, L_{2}\right\}$, then
	\begin{equation*}\label{561}
	|G(t, X(t))-G(t, Y(t))| \leq(|X(t)-Y(t)|+|X(t-\tau)-Y(t-\tau)|).
	\end{equation*}
	Thus $G(t, X(t))$ is Lipschitz continuous for all variables except $t$. 
	According to the known conditions, $G(t, X(t))$  is continuous in the neighborhood of the given initial value, hence $G(t, X(t))$  satisfies theorem \ref{TH1}, i.e. the system with the initial value has a unique solution. Theorem \ref{TH2} is verified. 
\end{proof}

\subsection{Stability of the equilibria}
Let $D_{t}^{\alpha} x(t)=0, D_{t}^{\alpha} y(t)=0, D_{t}^{\alpha} z(t)=0$, we obtain three equilibria: The infection-free equilibrium $E_{0}\left(\dfrac{\lambda}{d}, 0,0\right)$,
the infection equilibria
\begin{equation*}\label{441}
  E_{1}\left(\frac{1}{\beta}\left[a-\frac{1}{2 b \beta}\left(T_{3}-T_{1}\right)\right],-\frac{1}{T_{2}}\left(T_{3}-T_{1}\right),-\frac{1}{T_{4}}\left(T_{3}-T_{1}\right)\right),
\end{equation*}
and
\begin{equation*}\label{441}
E_{2}\left(\frac{1}{\beta}\left[a-\frac{1}{2 b \beta}\left(T_{3}-T_{1}\right)\right],-\frac{1}{T_{2}}\left(T_{1}+T_{3}\right),-\frac{1}{T_{4}}\left(T_{1}+T_{3}\right)\right),
\end{equation*}
where
\begin{equation*}\label{fdfd}
 \begin{array}{l}
{T_1} = \sqrt {{a^2}{b^2}{\beta ^2} - 2ab\beta cdq + 4\lambda b{\beta ^2}cq + {c^2}{d^2}{q^2}}  > 0,
\vspace{5pt}\\
{T_2} = 2\beta cq,
\vspace{5pt}\\
{T_3} = ab\beta  + cdq,
\vspace{5pt}\\
{T_4} = 2b\beta q.
\end{array}
\end{equation*}
\vspace{5pt}
Define the basic reproduction number $R_{0}=\dfrac{\lambda \beta}{d a}$. We only take the non-negative solution into account because of the constraints of the real situation. Obviously,
${E_2}(\dfrac{1}{\beta }[a - \dfrac{1}{{2b\beta }}({T_3} - {T_1})], - \dfrac{1}{{{T_2}}}({T_1} + {T_3}), - \dfrac{1}{{{T_4}}}({T_1} + {T_3}))$ is not a non-negative solution, while $E_{1}\left(\dfrac{1}{\beta}\left[a-\dfrac{1}{2 b \beta}\left(T_{3}-T_{1}\right)\right],-\dfrac{1}{T_{2}}\left(T_{3}-T_{1}\right),-\dfrac{1}{T_{4}}\left(T_{3}-T_{1}\right)\right)$  is a positive solution when $a d<\lambda \beta$  i.e. $R_{0}>1$ .

We first analyze equilibrium $E_0$ .\\
\begin{theorem}\label{TH3}
	When $R_0<1$ , $E_0$  is locally asymptotically stable.
\end{theorem}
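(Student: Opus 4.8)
The plan is to linearize the Caputo system \eqref{E2} about the infection-free equilibrium $E_0=\left(\frac{\lambda}{d},0,0\right)$ and apply Lemma~\ref{lemma1}. First I would compute the Jacobian of the right-hand side of \eqref{E2} at $E_0$. Writing the vector field as $(\lambda-dx-\beta xy,\ \beta xy-ay-qyz,\ cy(t-\tau)-bz)$, the Jacobian at $E_0$ is block-triangular: the $x$-row gives $\left(-d,\ -\beta\frac{\lambda}{d},\ 0\right)$, the $y$-row gives $\left(0,\ \beta\frac{\lambda}{d}-a,\ 0\right)$ (since $y=z=0$), and the $z$-row gives $\left(0,\ c,\ -b\right)$ — with the understanding that the delayed term $cy(t-\tau)$ contributes only to the off-diagonal $(3,2)$ entry at an equilibrium. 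Because the matrix is lower-block-triangular with the $(1,1)$ entry $-d$ decoupled, its eigenvalues are read off directly.

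The eigenvalues are $\lambda_1=-d$, $\lambda_2=-b$, and $\lambda_3=\beta\frac{\lambda}{d}-a=a\left(\frac{\lambda\beta}{da}-1\right)=a(R_0-1)$. The first two are real and negative, so $|\arg(\lambda_1)|=|\arg(\lambda_2)|=\pi>\frac{\alpha\pi}{2}$ for every $\alpha\in(0,1)$. For the third, the hypothesis $R_0<1$ forces $\lambda_3<0$, hence $|\arg(\lambda_3)|=\pi>\frac{\alpha\pi}{2}$ as well. By Lemma~\ref{lemma1}, all eigenvalues of $J(E_0)$ satisfy the required arc-condition, so $E_0$ is locally asymptotically stable. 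I would also remark that the time delay $\tau$ does not enter this argument: at $E_0$ the delayed variable is $y$, which is driven only through the $z$-equation, and the linearization in $y$ near $E_0$ is delay-free (the $(2,2)$ entry comes from $\beta xy-ay-qyz$, not from the delayed term), so the characteristic equation has no transcendental factor here and the fractional Routh–Hurwitz / argument criterion applies unchanged.

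The only genuinely delicate point is justifying that Lemma~\ref{lemma1} — stated for ODE-type fractional systems — legitimately governs the delayed system at $E_0$; the main obstacle is therefore conceptual rather than computational, namely arguing that the delay term drops out of the relevant part of the linearization so that the non-delayed fractional stability criterion suffices. I expect the proof to resolve this exactly as sketched: by observing the block structure that isolates the delayed coupling into an off-diagonal entry that does not affect the spectrum. Everything else is the short eigenvalue computation above together with a one-line invocation of $R_0<1\Rightarrow\lambda_3<0$.
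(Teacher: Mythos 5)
Your proposal is correct and follows essentially the same route as the paper: both compute the Jacobian at $E_0$, observe that the delayed term $c e^{-\sigma\tau}$ sits in an entry that does not affect the determinant, obtain the characteristic roots $-d$, $-b$, and $\beta\frac{\lambda}{d}-a$, and conclude local asymptotic stability for $R_0<1$ via the $|\arg(\sigma)|>\frac{\alpha\pi}{2}$ criterion. Your version is in fact slightly more explicit than the paper's about why the transcendental factor drops out and about verifying the argument condition for negative real eigenvalues.
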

\begin{proof}
	The Jacobian matrix of system (\ref{E2}) at $E_{0}\left(\dfrac{\lambda}{d}, 0,0\right)$ is
	\begin{equation*}\label{Jocabian_0}
	J\left(E_{0}\right)=\left( \begin{array}{ccc}{-d-\sigma} & {-\beta \dfrac{\lambda}{d}} & {0} \\ {0} & {\beta \dfrac{\lambda}{d}-a-\sigma} & {0} \\ {0} & {c e^{-\sigma \tau}} & {-b-\sigma}\end{array}\right).
	\end{equation*}
	The characteristic equation is
	\begin{equation*}\label{FDFD}
	(\sigma+d)(\sigma+b)\left(\sigma-\left(\beta \frac{\lambda}{d}-a\right)\right)=0.
	\end{equation*}
	Then we obtain the characteristic roots 
	\begin{equation*}
	{\sigma _1} =  - d < 0, {\sigma _2} =  - b < 0, {\sigma _{\rm{3}}} = \beta \dfrac{\lambda }{d} - a.
	\end{equation*}
	Therefore when  $R_{0}<1, \sigma_{3}=\beta \dfrac{\lambda}{d}-a<0$, $E_0$ is locally asymptotically stable;   
	when $R_0>1$, $\sigma_{3}=\beta \dfrac{\lambda}{d}-a>0$, $E_0$  is unstable.
\end{proof}
\begin{theorem}
	When $R_0<1$ , $E_0$ is globally asymptotically stable.
\end{theorem}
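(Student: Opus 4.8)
The plan is to strengthen Theorem \ref{TH3}: instead of only local stability, I would show that $E_0$ is globally attractive on the biologically feasible region, so that combining the two statements yields global asymptotic stability. The argument proceeds in three stages — confine the flow to a positively invariant compact set, force $y(t)\to 0$ by a fractional comparison estimate, and then propagate this to $z(t)\to 0$ and $x(t)\to\lambda/d$.

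First I would verify positive invariance and dissipativity of a region $\Omega=\{(x,y,z):0\le x\le\lambda/d,\ y\ge 0,\ z\ge 0,\ x+y\le M\}$ for a suitable constant $M$. Nonnegativity follows from the structure of (\ref{E2}): on $x=0$ one has ${}_{{t_0}}^CD_t^\alpha x=\lambda>0$, on $y=0$ one has ${}_{{t_0}}^CD_t^\alpha y=0$, and on $z=0$ one has ${}_{{t_0}}^CD_t^\alpha z=cy(t-\tau)\ge 0$. From the first equation ${}_{{t_0}}^CD_t^\alpha x\le\lambda-dx$, so the fractional comparison principle gives $x(t)\le\frac{\lambda}{d}+\left(x_0-\frac{\lambda}{d}\right)E_\alpha\!\left(-d(t-t_0)^\alpha\right)$; hence $x(t)\le\lambda/d$ for all $t$ whenever $x_0\le\lambda/d$, and $\limsup_{t\to\infty}x(t)\le\lambda/d$ in general. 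Adding the first two equations bounds $x+y$ in the same way, and then the third bounds $z$, so $\Omega$ is positively invariant and attracting and it suffices to prove global attractivity of $E_0$ within $\Omega$.

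Next I would take the scalar Lyapunov candidate $V(t)=y(t)\ge 0$. Along solutions in $\Omega$, using $x\le\lambda/d$ and $z\ge 0$,
\begin{equation*}
{}_{{t_0}}^CD_t^\alpha V(t)=\beta x(t)y(t)-ay(t)-qy(t)z(t)\le\left(\frac{\beta\lambda}{d}-a\right)y(t)=a\,(R_0-1)\,V(t).
\end{equation*}
Since $R_0<1$ the coefficient $a(R_0-1)$ is strictly negative, so the Mittag-Leffler / fractional Gronwall comparison inequality yields $0\le y(t)\le y(t_0)\,E_\alpha\!\big(a(R_0-1)(t-t_0)^\alpha\big)\to 0$ as $t\to\infty$. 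Feeding $y(t-\tau)\to 0$ into ${}_{{t_0}}^CD_t^\alpha z=cy(t-\tau)-bz$ and applying the same comparison argument gives $z(t)\to 0$, and feeding $y(t)\to 0$ into ${}_{{t_0}}^CD_t^\alpha x=\lambda-dx-\beta xy$ gives $x(t)\to\lambda/d$. Together with the local asymptotic stability from Theorem \ref{TH3}, this shows $E_0$ is globally asymptotically stable on $\Omega$.

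The main obstacle I expect is making these comparison steps rigorous in the fractional, delayed setting: unlike the integer-order case, ${}_{{t_0}}^CD_t^\alpha V\le kV$ does not immediately give exponential decay, so one must carefully invoke the correct fractional comparison lemma and check its hypotheses, and the passage from ``$y(t-\tau)\to 0$'' to ``$z(t)\to 0$'' needs a convergence result for a forced scalar Caputo equation with a vanishing delayed source (a fractional Barbalat-type argument). A more self-contained alternative that avoids the cascade is to build a single Lyapunov functional combining a quadratic part handled by Lemma \ref{lemma2} with a Krasovskii-type term $\int_{t-\tau}^{t}y(s)\,ds$ absorbing the delay, with coefficients tuned so that its Caputo derivative is dominated by a negative definite function on $\Omega$ when $R_0<1$, and then conclude via the fractional LaSalle invariance principle.
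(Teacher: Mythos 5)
Your proposal is correct and follows essentially the same route as the paper: bound $x$ by $\lambda/d$ using nonnegativity, compare the $(y,z)$ subsystem against the resulting linear (triangular) system whose decay is governed by $\beta\lambda/d-a<0$ when $R_0<1$, and then conclude $x(t)\to\lambda/d$. The main difference is one of rigor rather than strategy — you make explicit the positively invariant region, the Mittag-Leffler decay estimate, and the need for a fractional comparison lemma and a convergence result for the forced $z$-equation, all of which the paper's proof asserts without justification via the characteristic roots of the comparison system.
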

\begin{proof}
	In theorem \ref{TH3} We studied the local asymptotic stability of $E_0$, now we discuss its global attractivity.
	
	From eq. (\ref{E2}) we know, when $(y(t), z(t)) \rightarrow(0,0)$, $x(t) \rightarrow \dfrac{\lambda}{d}$, 
	the non-negativity of the solution and the property of eq. (\ref{E2}) yield that $y(t)$, $z(t)$ satisfy the conditions below
	\begin{equation}\label{E3}
	\left\{\begin{array}{c}{D^{\alpha} y(t) \leq \beta \dfrac{\lambda}{d} y-a y},
	\vspace{5pt}\\
	{D^{\alpha} z(t)=c y(t-\tau)-b z}.\end{array}\right.
	\end{equation}
	The right hand side of the first inequality is monotonically increasing with respect to $y$, the right hand side of the second equality is monotonically increasing with respect to $y(t-\tau)$. For system after changing the inequality to equality, if $(y(t), z(t)) \rightarrow(0,0)$ as $t \rightarrow \infty$, then eq. (\ref{E3}) possesses the same property. The characteristic equation of the equality form of system (\ref{E3}) is
	\begin{equation*}\label{fdfdfd}
	(\sigma+b)\left(\sigma-\left(\beta \frac{\lambda}{d}-a\right)\right)=0.
	\end{equation*}
	Consequently the characteristic roots are $\sigma_{1}=-b<0$, $\sigma_{2}=\beta \dfrac{\lambda}{d}-a$.
	
	When $R_0<1$ ,$\sigma_{2}=\beta \dfrac{\lambda}{d}-a<0$, $E_0$ has global attractivity, hence it holds global asymptotic stability.
	
	When $R_0>1$ ,$\sigma_{2}=\beta \dfrac{\lambda}{d}-a>0$, $E_0$ does not possess global attractivity.
\end{proof}

Now we analyze the infection equilibrium $E_1$.

Let $E_{1}=\left(X_{1}, Y_{1}, Z_{1}\right)$ then the Jocabian matrix at $E_1$ is
\begin{equation*}\label{Jocabian_1}
  J\left(E_{1}\right)=\left( \begin{array}{ccc}{-d-\beta Y_{1}-\sigma} & {-\beta X_{1}} & {0} \\ {\beta Y_{1}} & {\beta X_{1}-a-q Z_{1}-\sigma} & {-q Y_{1}} \\ {0} & {c e^{-\sigma \tau}} & {-b-\sigma}\end{array}\right).
\end{equation*}
The characteristic equation is
\begin{equation*}\label{KGBFV}
  {\sigma ^3} - (A - B){\sigma ^2} - (C - D - W{e^{ - \sigma \tau }})\sigma  + E - F + G{e^{ - \sigma \tau }} = 0,
\end{equation*}
where
\begin{equation*}\label{LKJHGF}
\begin{array}{l}
A = {X_1}\beta,
\vspace{5pt}\\
B = b + d + a + \beta {Y_1} + q{Z_1},
\vspace{5pt}\\
C = {X_1}b\beta  + {X_1}\beta d,
\vspace{5pt}\\
D = ad + bd + ab + {Y_1}a\beta  + {Y_1}b\beta  + {Y_1}cp + {Z_1}bp + {Z_1}dp + {Y_1}{Z_1}\beta p,
\vspace{5pt}\\
W = {Y_1}pc,
\vspace{5pt}\\
E = {Y_{\rm{1}}}ab\beta  + abd + {Z_1}bdp + {Y_1}{Z_1}b\beta p,
\vspace{5pt}\\
F = {X_{\rm{1}}}b\beta d,
\vspace{5pt}\\
G = Y_1^2\beta p + {Y_1}dp.
\end{array}
\vspace{5pt}
\end{equation*}
The three characteristic roots are associated with the time delay parameter $\tau$ but are independent of the fractional-order $\alpha$.
 According to the Hurwitz stability criterion \cite{routh_AHMED20061,Hurwitz1895}, when $\alpha$ is small enough, the eigenvalue $\lambda$ of the equation is more likely to satisfy the condition $|\arg (\sigma)|>\dfrac{\alpha \pi}{2}$, then  $E_1$ has local asymptotical stability. Therefore,\\
i) if $E - F + G = 0$, then there exists zero solutions to the equation, and $E_1$ is unstable by theorem \ref{TH1}; \\
ii)  if $E - F + G \ne 0$, then there is no zero solution to the equation. $E_1$ is locally asymptotically stable when  $\alpha$ is small enough.
\section{Numerical simulation}
For system (\ref{E2}), conforming to reference \cite{WangKaiFaMain}, we take the parameters of the integer-order system in chaotic state $\beta=0.002, d=0.1, q=0.05, a=5, c=0.2, e=0.3, \lambda=1500$ and apply the Adames-Bashforth-Moulton method \cite{Diethelm2002} to numerically solve the Caputo fractional differential equations.
\subsection{Effect of time-delay on the fractional HBV model}
Tentatively we set the fractional-order to be 0.985 and investigate time-delay intervals $\tau \in[5,11]$ and $\tau \in[11,17]$ respectively. Fig. \ref{F1} shows bifurcation diagrams and the largest Lyapunov exponent versus the immune time delay. As presented in fig.\ref{F1}, the alteration of the HBV population $y$  suggests that the Caputo fractional HBV system has abundant dynamical properties. From fig.\ref{F1} (a), it is clear that system (\ref{E2}) reaches a chaotic state through period-doubling bifurcation, displaying distinct chaotic behavior in the interval $\tau \in[9,10.1]$. From fig. \ref{F1}(b), the system returns to a stable state after entering the chaos for the first time, and then enters chaotic state again during time lag $\tau \in[15.2,15.6]$. The change of the virus load of fig. \ref{F1} illustrates the course of the infection of healthy liver cells. The corrsponding clinical situation is the patient's infection with hepatitis B virus, the subsequent incubation period, and HBV's rapid proliferation after the first infection. It can be seen that time delay brings about more complicated chaotic characteristics. And these numerical results are clinically consistent with the common situations of the recurrence of HBV after healed \cite{CALVARUSO2017898,XiePrevention}. Compared with the integer-order system \cite{WangKaiFaMain}, the fractional-order system is closer to the real evolution of hepatitis B in vivo.
\begin{figure}[htbp]   
  \centering
  \subfigure[]{
    \label{fig:subfig:onefunction} 
    \includegraphics[scale=0.5]{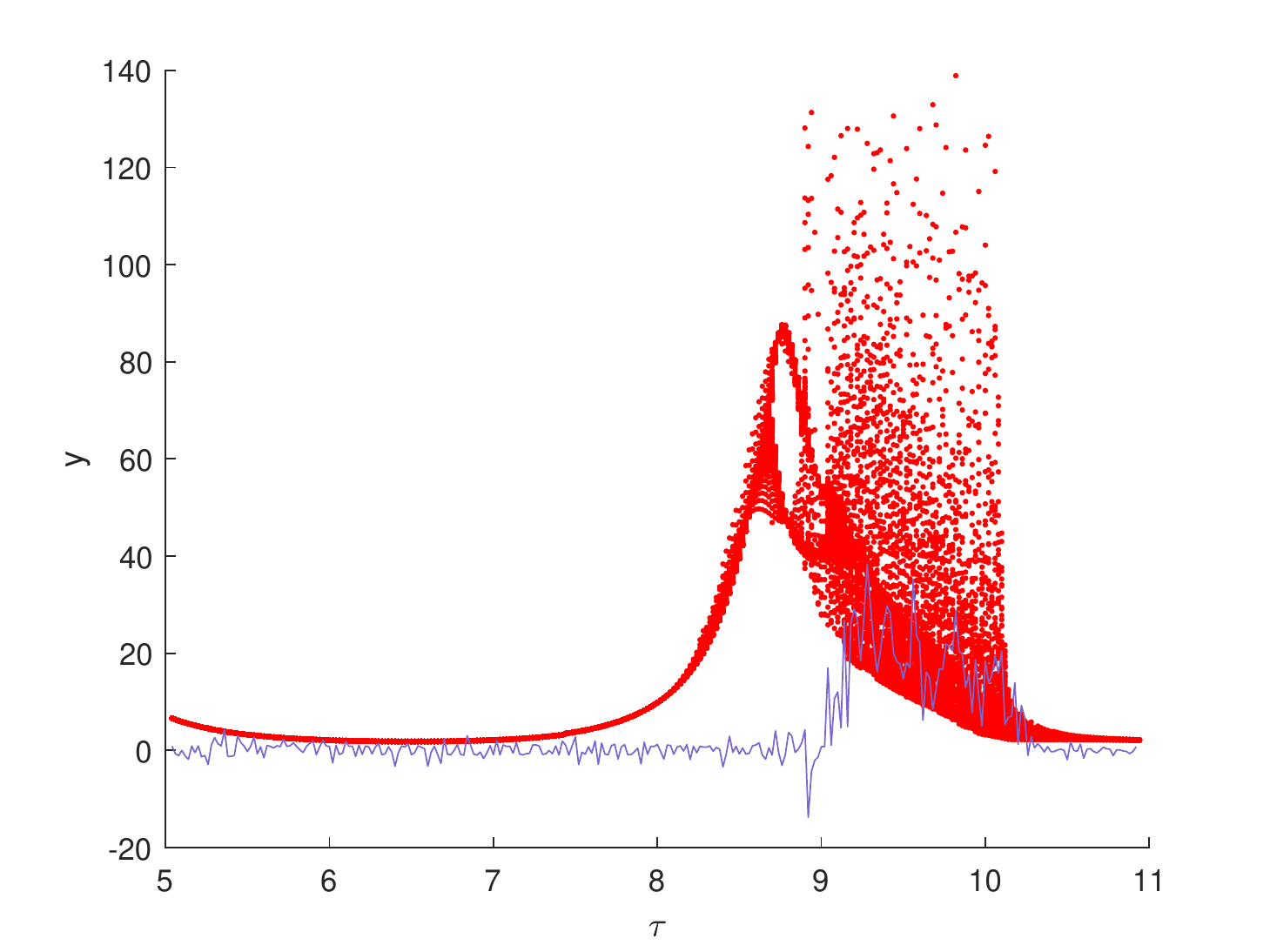}}
  \hspace{0in}
  \subfigure[]{
    \includegraphics[scale=0.5]{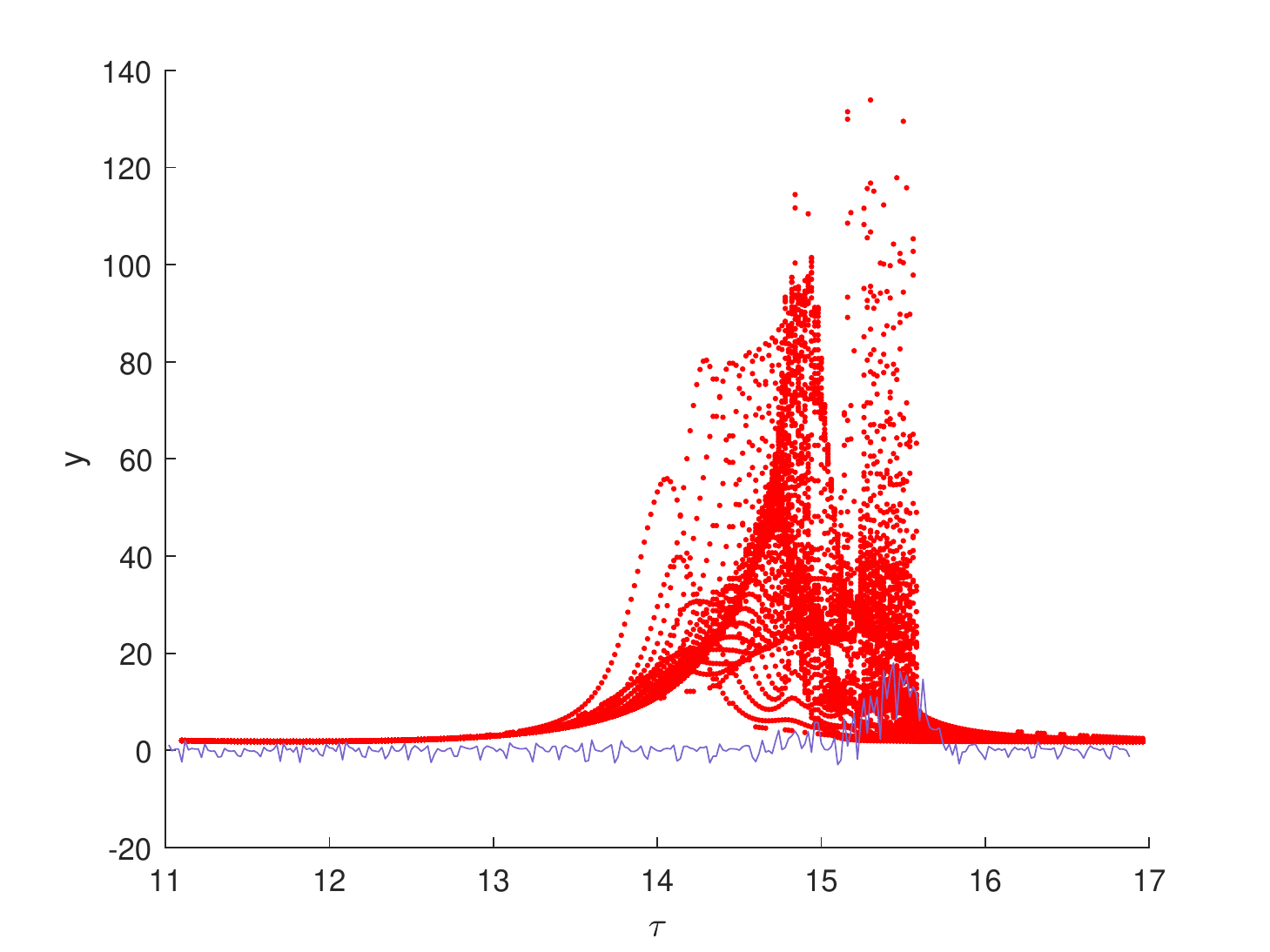}}
  \caption{Bifurcation diagrams and largest Lyapunov exponent diagrams of y with time delay:\\
(a) $\tau \in[5,11]$ ; (b) $\tau \in[11,17]$.
}
  \label{F1} 
\end{figure}
\subsection{Effect of fractional-order on the fractional HBV model}
Because fractional HBV system producing chaos represents diffusion of hepatitis B virus and deterioration of the disease, the development of Hepatitis B should be controlled as early as possible. Effective treatment measures should be taken during the medical process of the first onset of hepatitis B. For this reason, this paper focuses on the critical period of the hepatitis B immune response, i.e. the first time  the system is in a chaotic state. 
Taking time-delay parameter $\tau = 9 $ with a noticeable chaotic characteristics, we observe the influence of the order on the fractional system. Fig. \ref{F2}(a) and (b) are the bifurcation diagrams and the largest Lyapunov exponent diagrams versus $x$ and $y$, respectively. We can learn from fig. \ref{F2}(a) that when the fractional-order $\alpha \in[0.951,0.986]$, the fractional HBV system presents conspicuous chaotic characteristics. And as depicted in fig. \ref{F2}(b), when the fractional-order $\alpha \in[0.953,0.986]$, the image also shows obvious chaotic behavior. In these two intervals, we can also discover dark lines appear within the ranges of the variables. The change of largest Lyapunov exponent infers that in the neighbourhood of bifurcation points the state of system becomes extremely sensitive to fractional-order changes. Thereby, it makes great sense to study the influence of the order on the fractional HBV system. In this paper we take fractional-order parameter $\alpha = 0.985$ to further explore the fractional HBV model.\\
\begin{figure}[htbp]
  \centering         
  \subfigure[]{
    \label{fig:subfig:onefunction} 
    \includegraphics[scale=0.5]{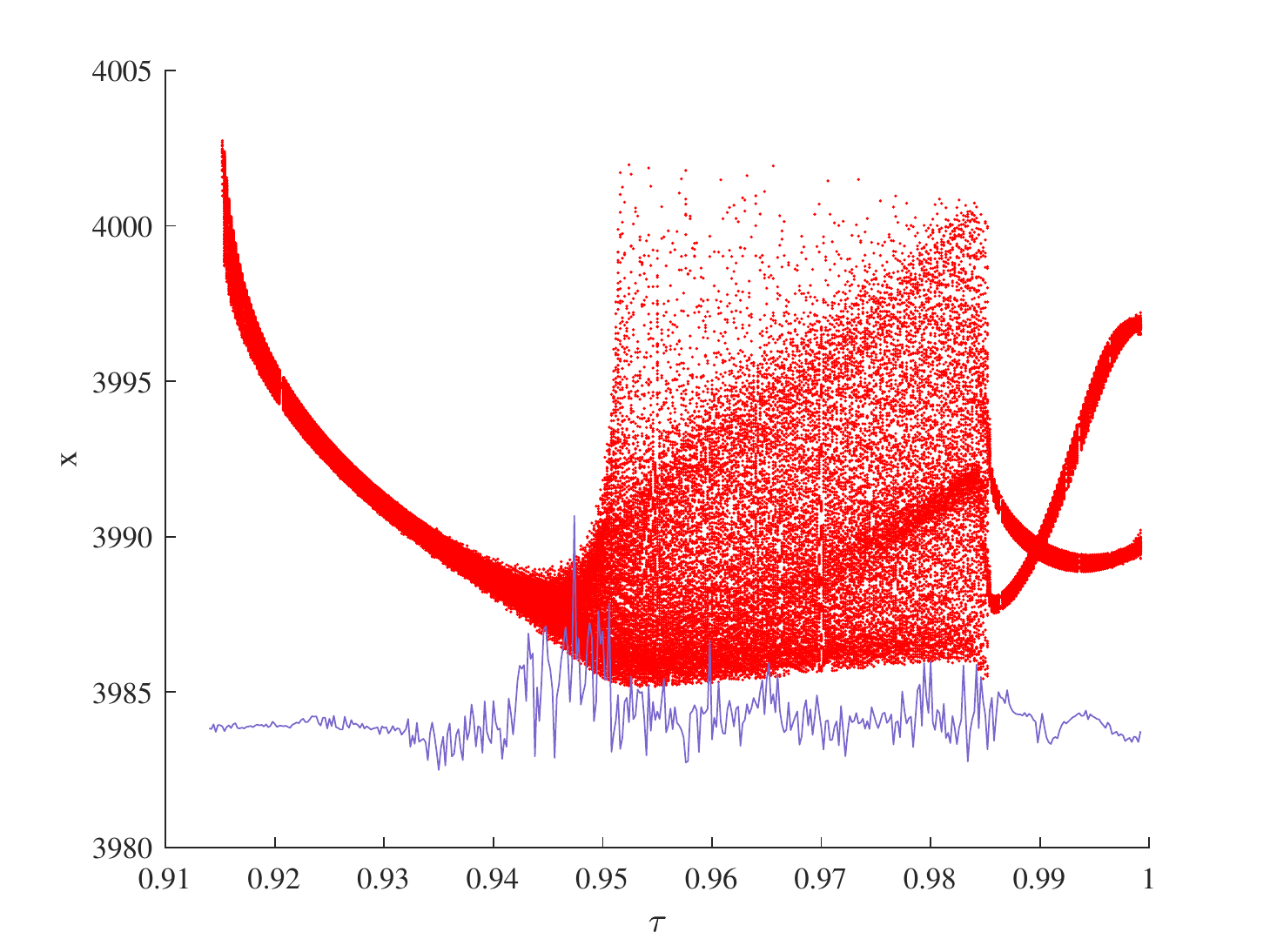}}
  \hspace{0in}
  \subfigure[]{
    \includegraphics[scale=0.5]{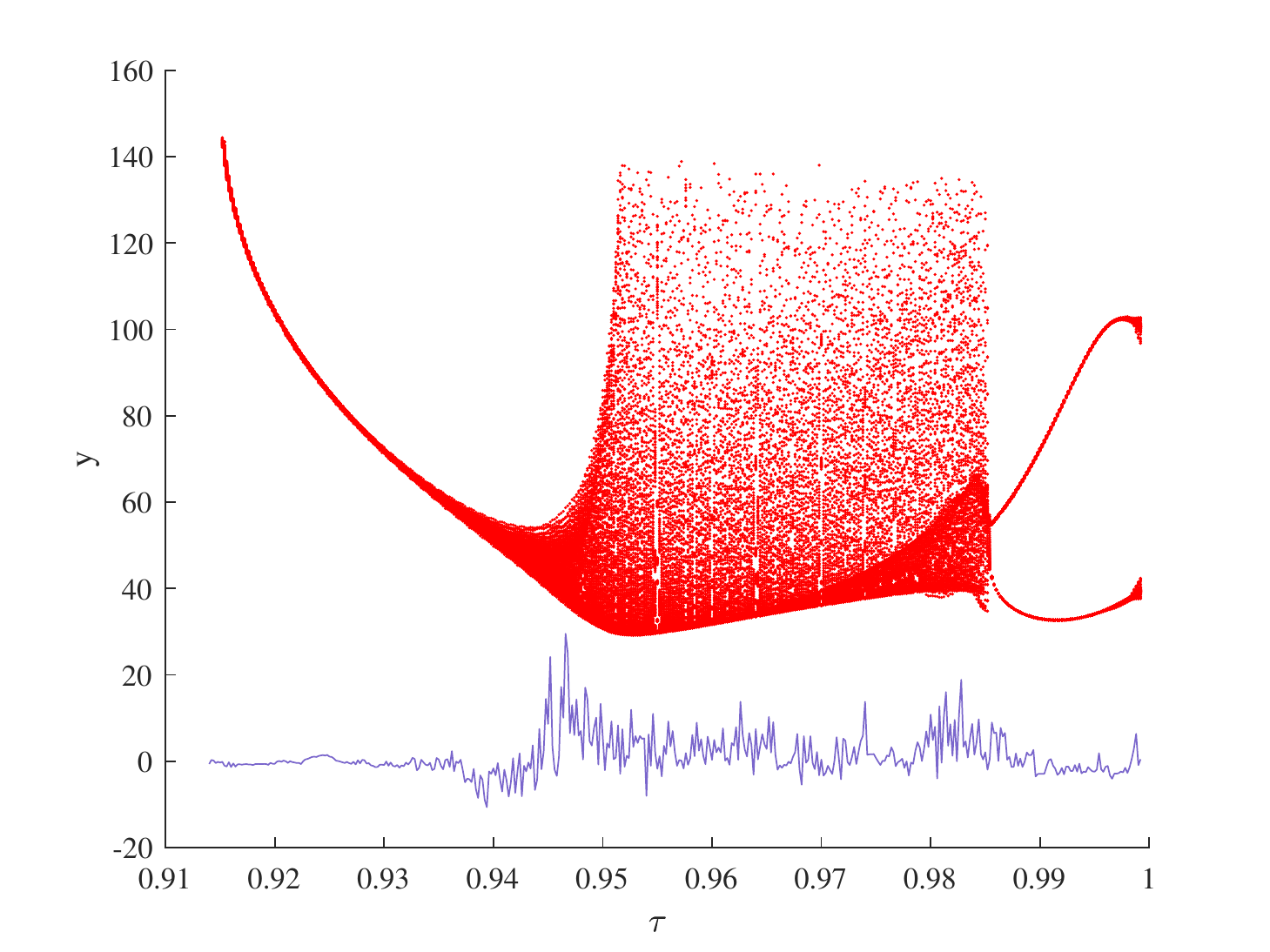}}
  \caption{Bifurcation diagrams and largest Lyapunov exponent diagrams of system (\ref{E2}) with fractional-order $\alpha \in[0.91,1]$: (a) bifurcation diagram and largest Lyapunov exponent diagram of x; (b) bifurcation diagram and largest Lyapunov exponent diagram of y.}
  \label{F2} 
\end{figure}
\subsection{Chaotic characteristics}
In order to investigate the specific effect of time delay on the stability of the system, two time-delay parameters $\tau=7$ and $\tau=9$ are selected respectively while other parameter values remain unchanged. Fig. \ref{F3} shows the phase diagram and Poincar\'e section of system (\ref{E2}) at order $\alpha=0.985$, time delay $\tau=7$.  
\begin{figure}[htbp]
  \centering
  \subfigure[]{
    \label{fig:subfig:onefunction} 
    \includegraphics[scale=0.5]{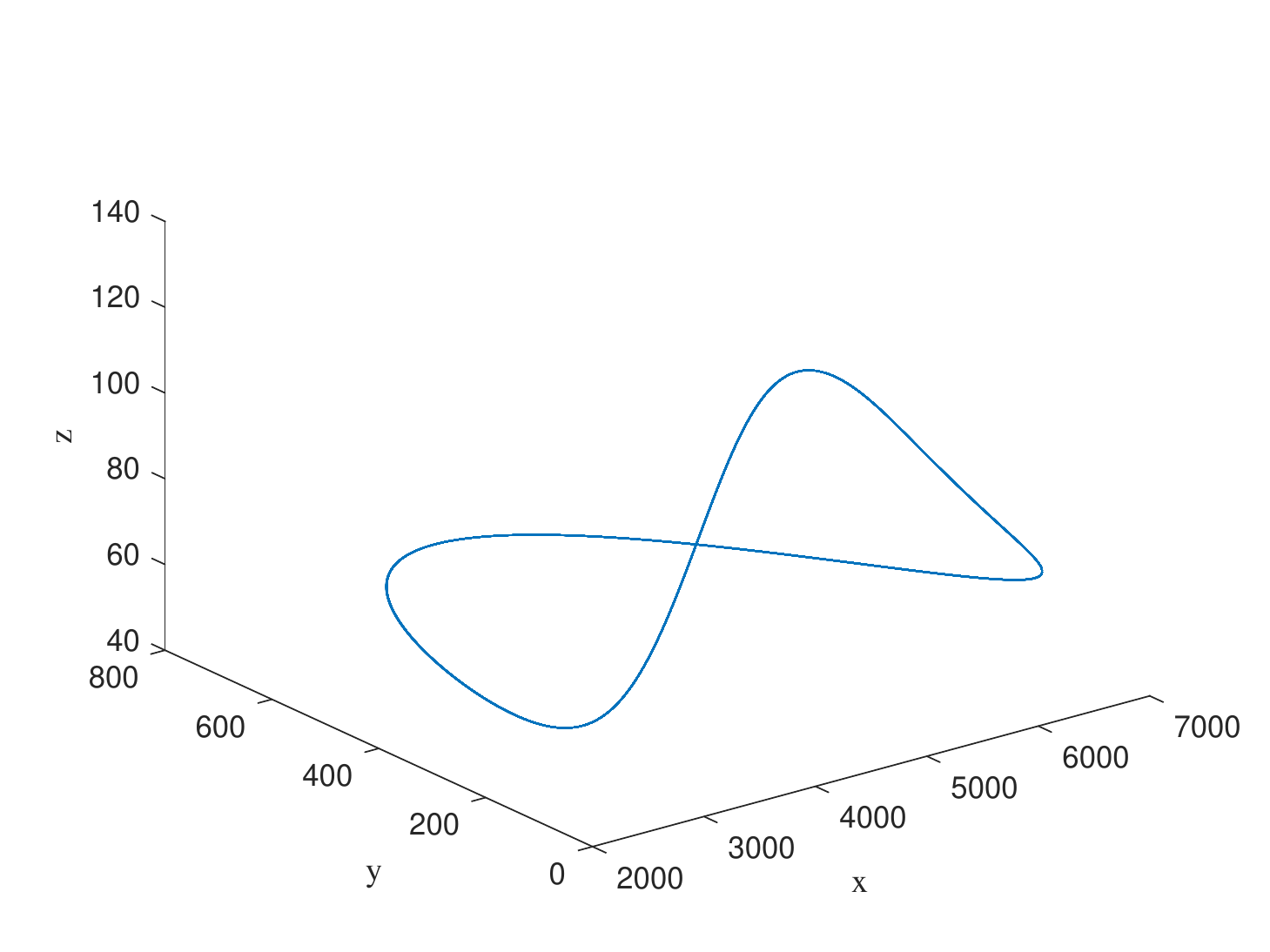}}
  \hspace{0in}
  \subfigure[]{
    \includegraphics[scale=0.5]{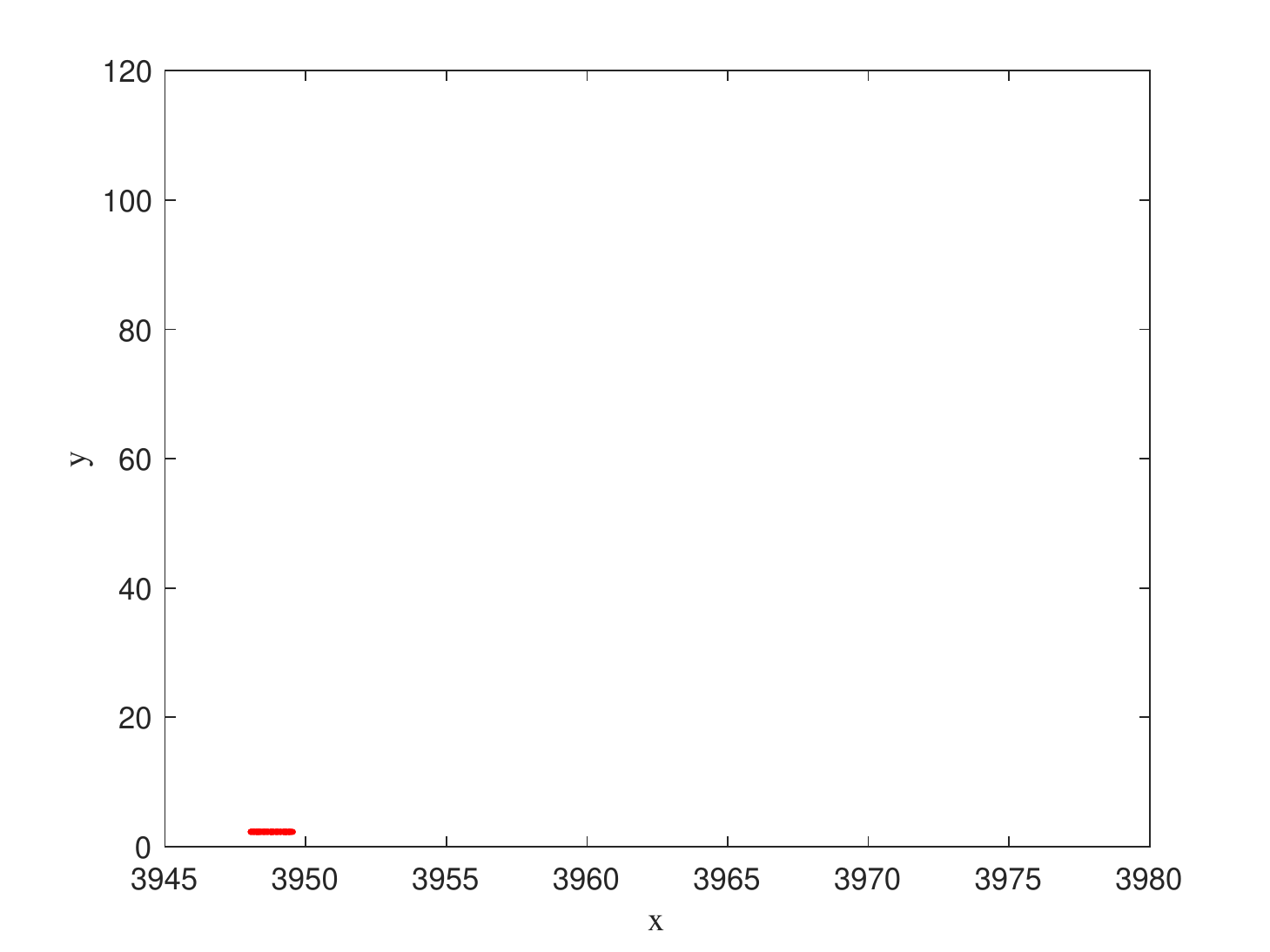}}
  \caption{System order in the case $\alpha=0.985$ ; (a) phase diagram; (b) Poincar\'e section.}
  \label{F3} 
\end{figure}
Poincar\'e section possessing only one fixed point or a few discrete points indicates the system is periodic, and posscessing a closed curve indicates the system is in a quasi-periodic state. Thus it can be observed that the system is quasi-periodic in fig. \ref{F3}. Additionally, if the Poincar\'e section possess dense points with hierarchical structure, the system is in a chaotic state. From fig. \ref{F4}, it can be judged that when the immune delay  $\tau=9$, the system is chaotic.
\begin{figure}[H]
  \centering
  \subfigure[]{
    \label{fig:subfig:onefunction} 
    \includegraphics[scale=0.6]{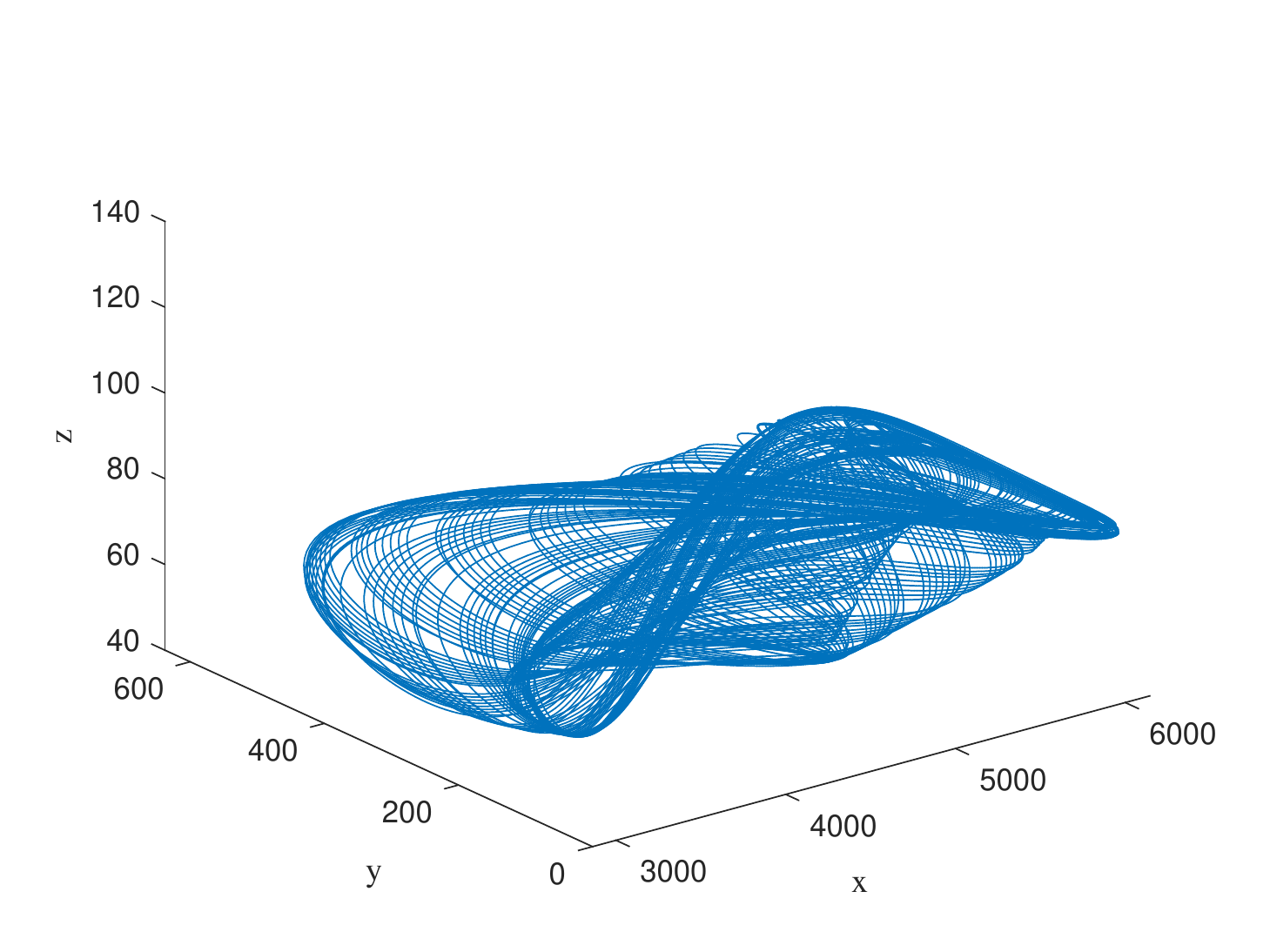}}
  \hspace{0in}
  \subfigure[]{
    \includegraphics[scale=0.5]{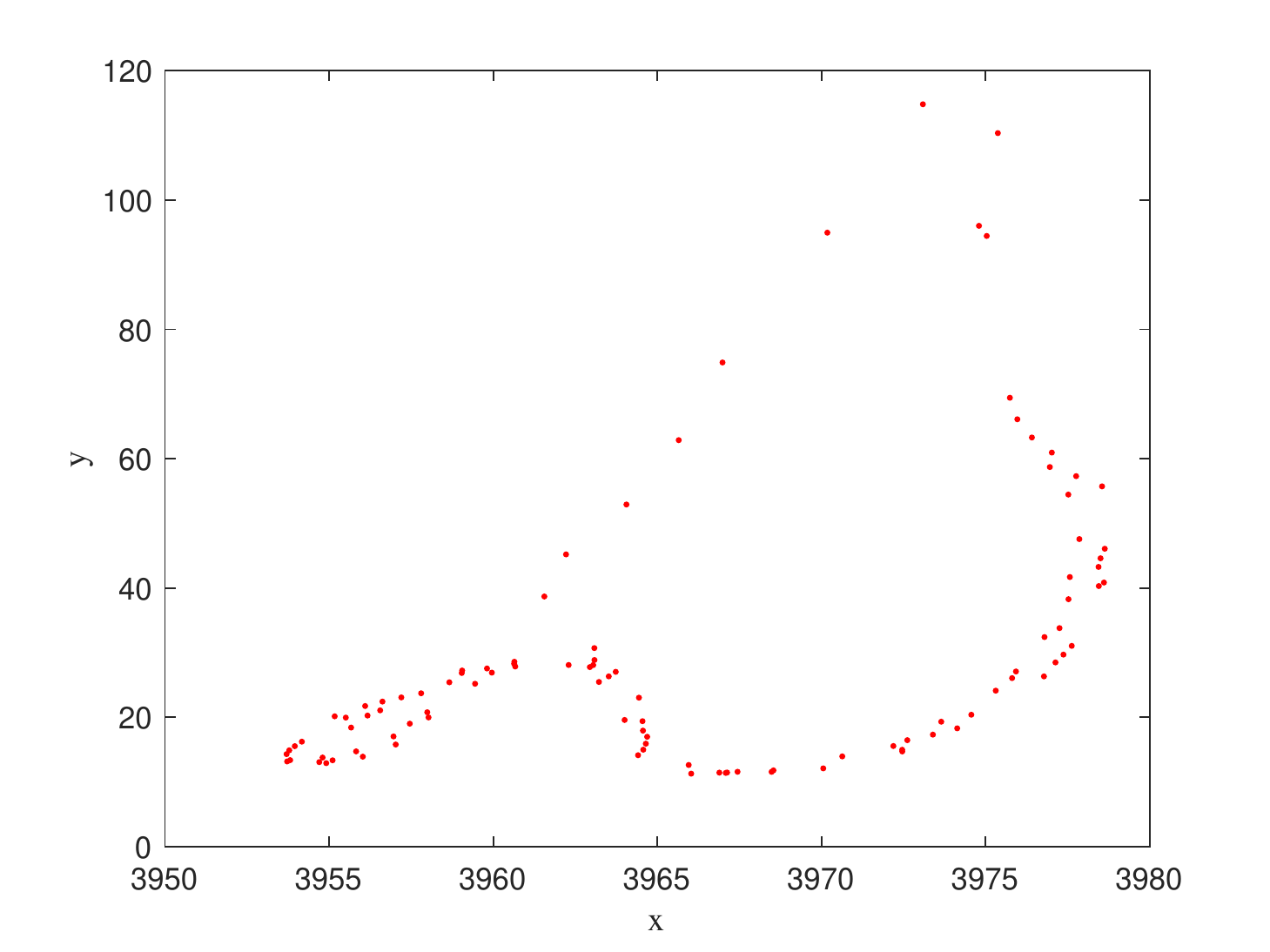}}
  \caption{System order in the case $\alpha=0.985$ ; (a) phase diagram; (b) Poincar\'e section.}
  \label{F4} 
\end{figure}
\begin{figure}[h]
  \centering
  \subfigure[]{
    \label{fig:subfig:onefunction} 
    \includegraphics[scale=0.5]{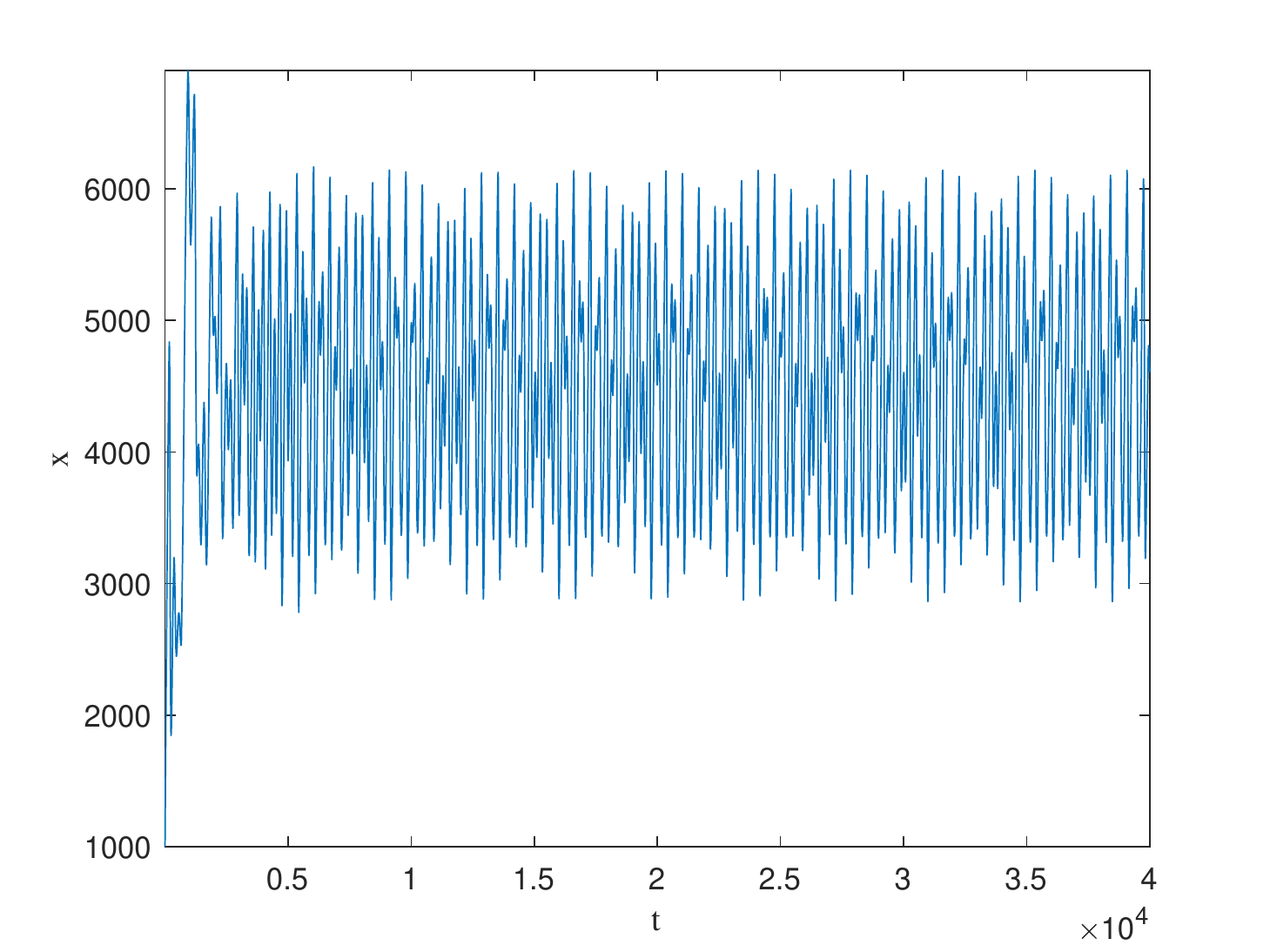}}
  \subfigure[]{
    \label{fig:subfig:onefunction} 
    \includegraphics[scale=0.5]{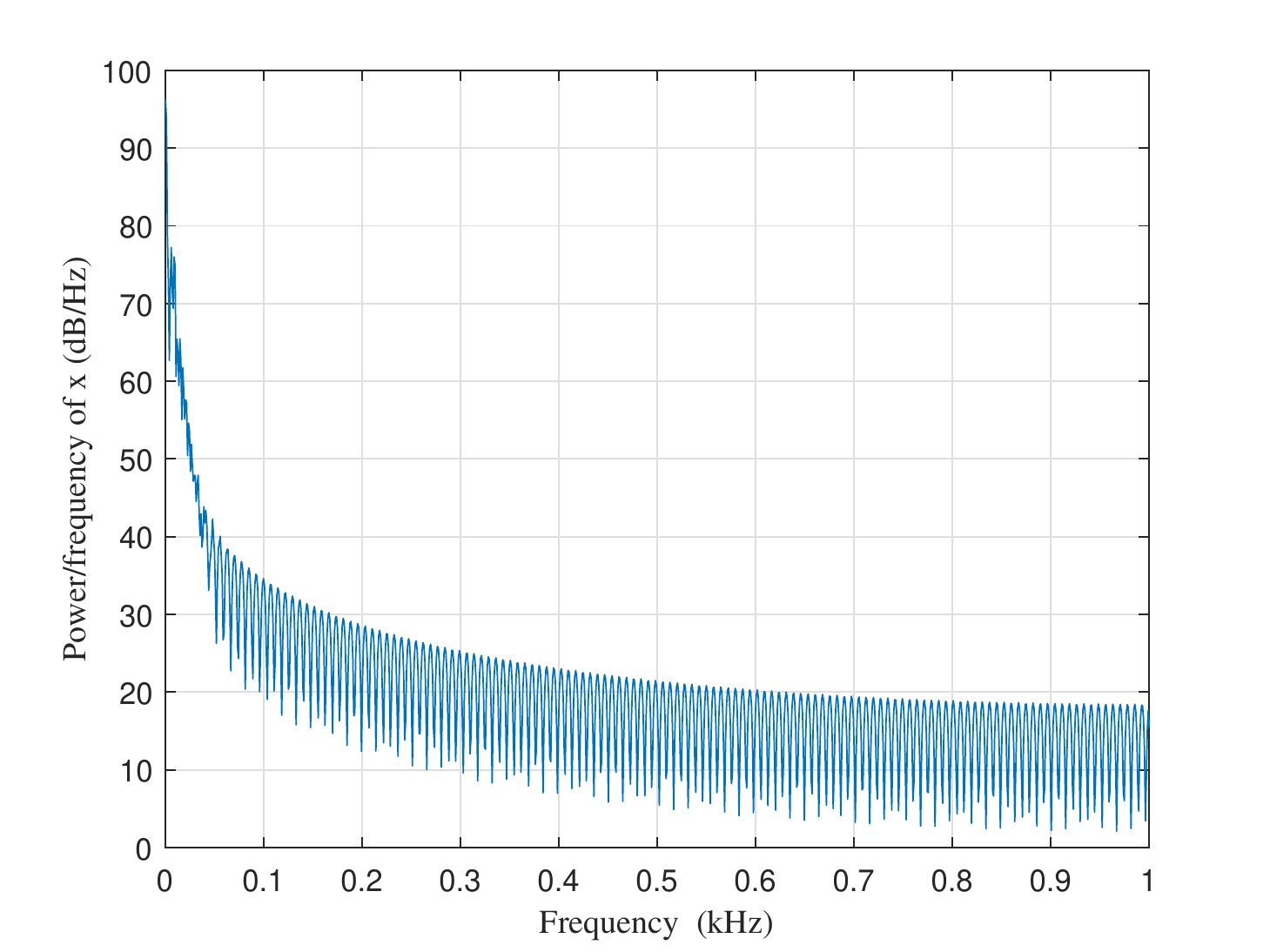}}
  \subfigure[]{
    \label{fig:subfig:onefunction} 
    \includegraphics[scale=0.5]{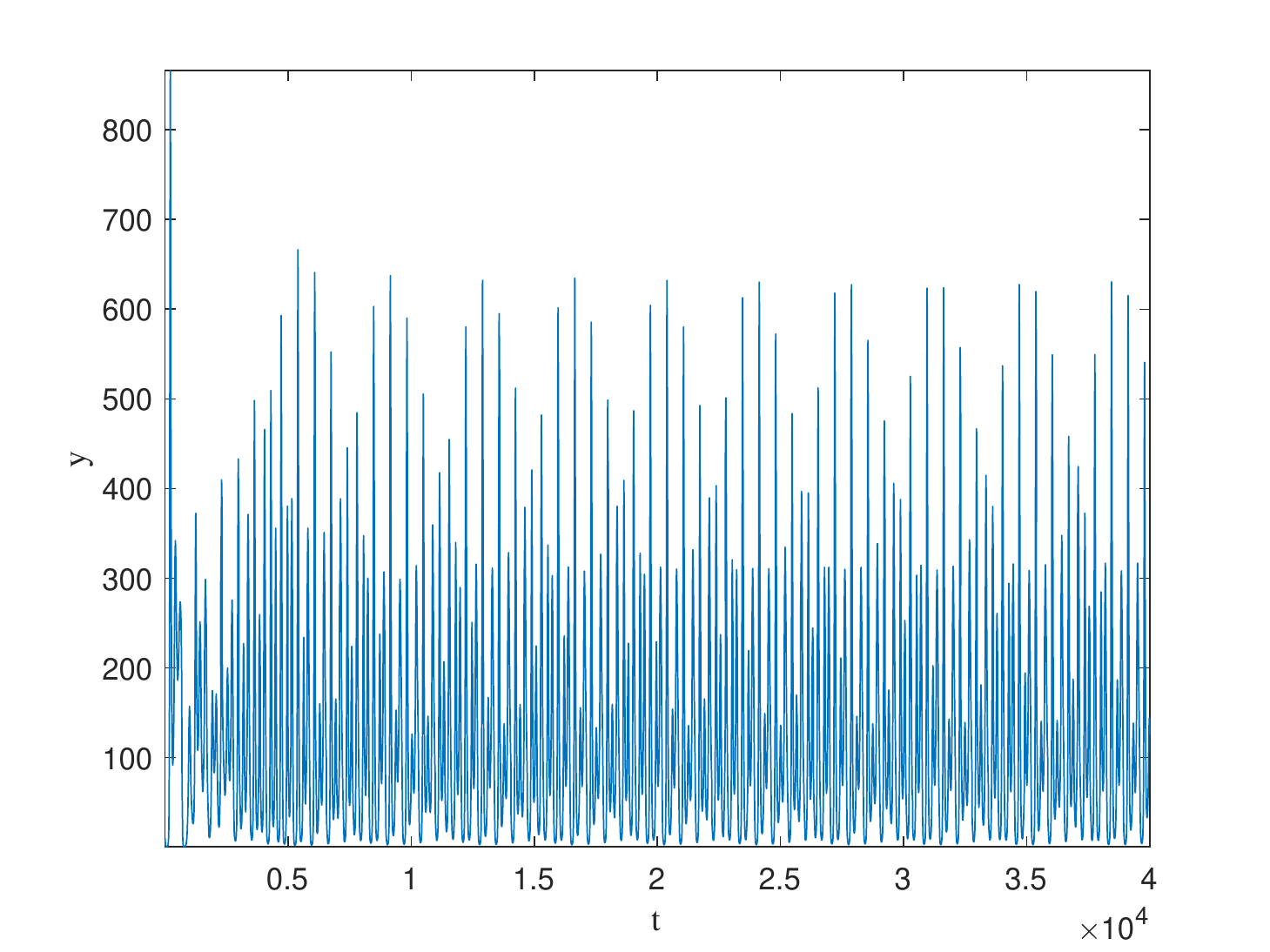}}
  \subfigure[]{
    \label{fig:subfig:onefunction} 
    \includegraphics[scale=0.5]{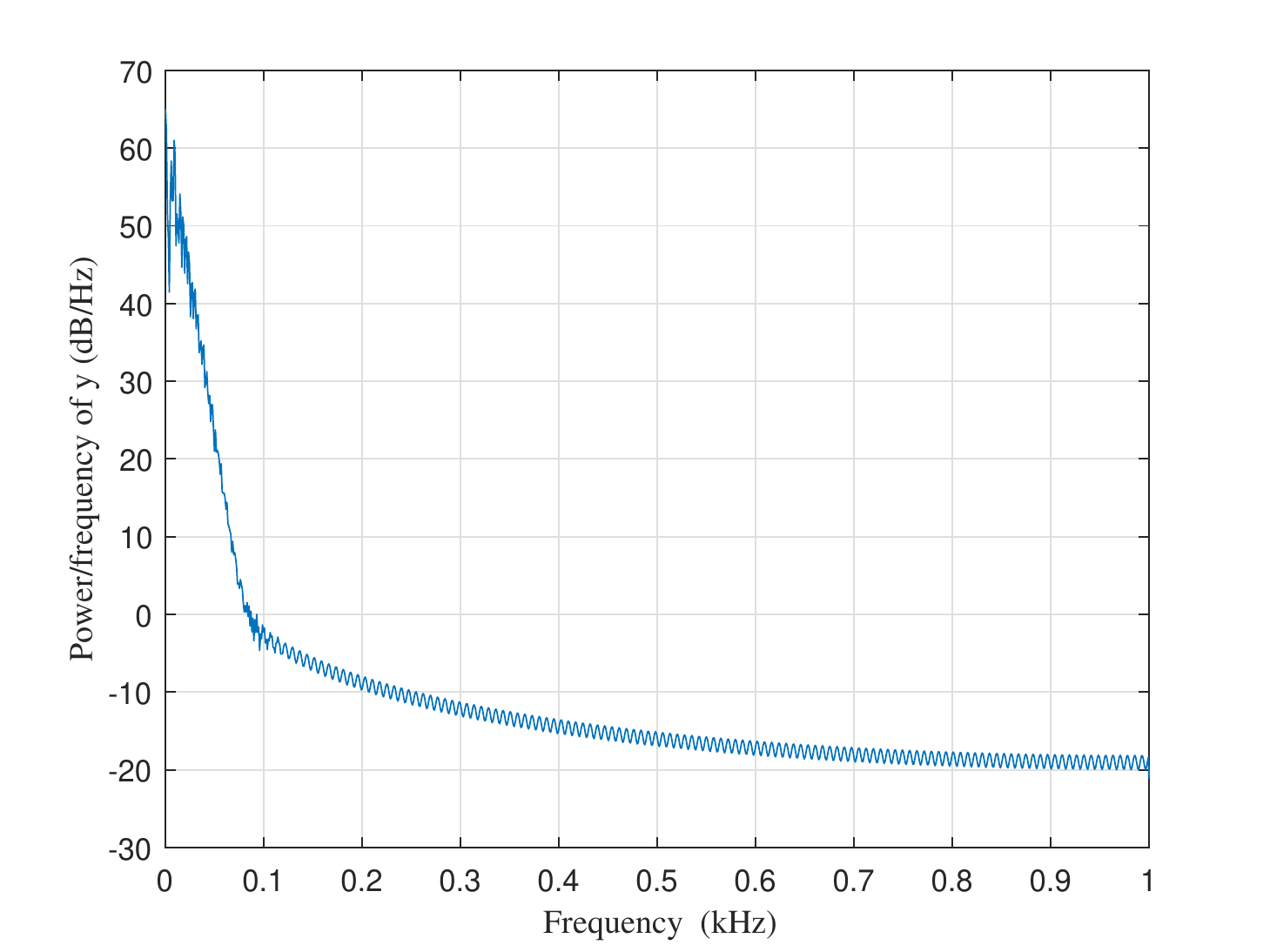}}
  \subfigure[]{
    \label{fig:subfig:onefunction} 
    \includegraphics[scale=0.5]{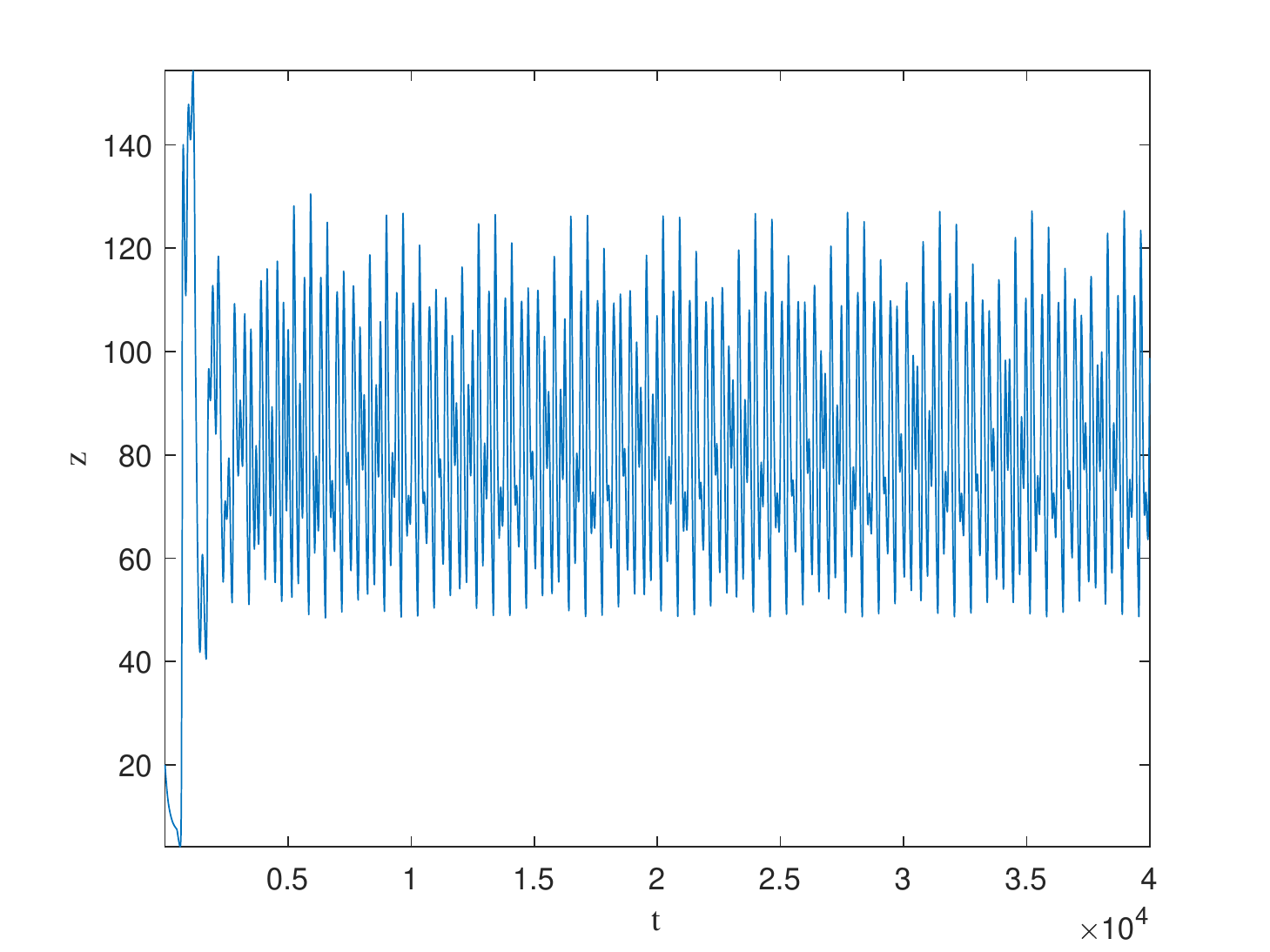}}
  \subfigure[]{
    \label{fig:subfig:onefunction} 
    \includegraphics[scale=0.5]{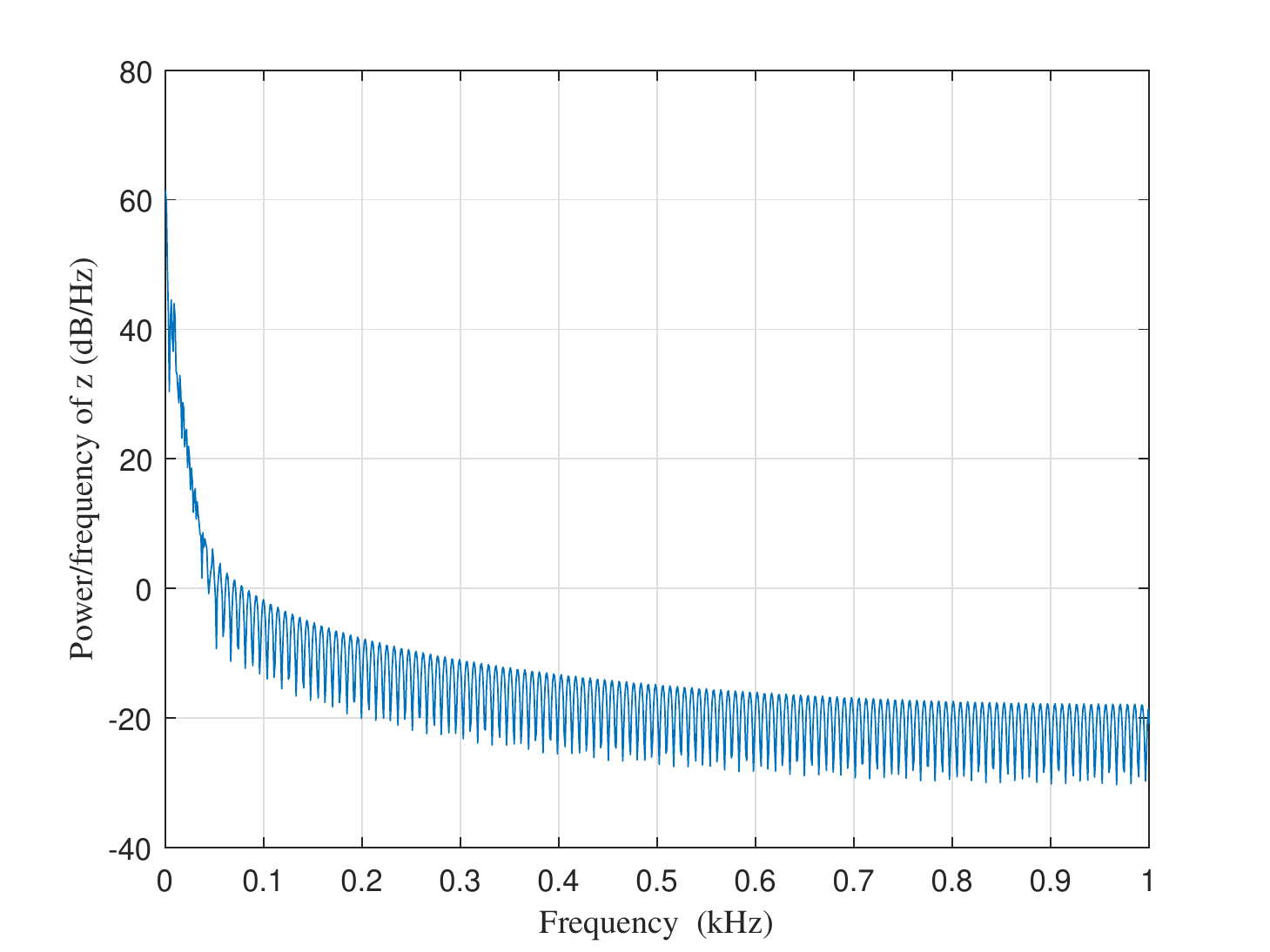}}
  \caption{Simulation of fractional HBV model with time delay $\tau=9.5$ (a) Curve of $x-t$; (b) power spectrum; (c) Curve of $y-t$; (d) power spectrum; (e) Curve of $z-t$; (f) power spectrum.}
  \label{F5} 
\end{figure}
To further expound the dynamics of the Caputo fractional HBV system, we characterize system (\ref{E2}) by plotting time courses and power spectrums. Fig. \ref{F5} displays curves of $x-t$(fig. \ref{F5}(a)), $y-t$(fig. \ref{F5}(c)), $z-t$(fig. \ref{F5}(e)), and power spectrum diagrams of $x,y,z$. As a result, phase diagram in fig. \ref{F5}(a), dense points with hierarchies on the Poincar\'e section in fig. \ref{F5}(b), and continuous spectra with broad peak of noise background shown in the power spectra of Fig. \ref{F5} all indicates that the system appears complex dynamics of chaotic phenomenon when time-delay $\tau=9$. It is found that under unstable physiological conditions, consistent with integer-order HBV systems, fractional-order systems can also be chaotic.
\section{Asymptotical stability theory of Caputo fractional system with delay}
Wu \cite{WuXiang} introduced a criterion on asymptotic stability of Riemann-Liouville fractional differential nonlinear systems, but it is not clear whether the theorem is applicable to Caputo fractional differential. In order to explore the theoretical nature of Caputo fractional-order systems with time delay, we extend the theoretical results to the case of Caputo fractional differential. We give a new theory start with the following lemmas.
\begin{lemma}\label{lemma3}
	\cite{Liu2015} The inequality
	\begin{equation*}\label{LKJHGF}
	2{x^T}y \le \varepsilon {x^T}x + \frac{1}{\varepsilon }{y^T}y
	\end{equation*}
	holds for any
	\begin{equation*}
	x, y \in \mathbb{R}^{n}, \varepsilon>0.
	\end{equation*}
\end{lemma}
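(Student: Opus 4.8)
\emph{Proof proposal.} The plan is to obtain the inequality by completing the square on the two vectors simultaneously, rather than by applying the scalar AM--GM inequality coordinate by coordinate; the vectorial argument keeps the parameter $\varepsilon$ transparent and disposes of the cross term $x^{T}y$ in one stroke.

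First, since $\varepsilon>0$ the scalars $\sqrt{\varepsilon}$ and $1/\sqrt{\varepsilon}$ are well defined, so we may form the vector
\[
u:=\sqrt{\varepsilon}\,x-\frac{1}{\sqrt{\varepsilon}}\,y\in\mathbb{R}^{n}.
\]
Its Euclidean inner product with itself is nonnegative, $u^{T}u\ge 0$. Expanding by bilinearity and using the symmetry $x^{T}y=y^{T}x$ gives
\[
0\le u^{T}u=\varepsilon\,x^{T}x-2\,x^{T}y+\frac{1}{\varepsilon}\,y^{T}y,
\]
and rearranging yields exactly $2x^{T}y\le \varepsilon\,x^{T}x+\frac{1}{\varepsilon}\,y^{T}y$, which is the assertion.

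I do not anticipate any genuine obstacle here: the only points worth stating explicitly are that the hypothesis $\varepsilon>0$ is used twice (to form $\sqrt{\varepsilon}$, and to preserve the sense of the inequality), and that the statement is merely the vector form of Young's inequality with the self-conjugate exponents $2$ and $2$. One may also record that equality holds if and only if $u=0$, i.e. $y=\varepsilon x$, a remark that will be convenient later when the lemma is invoked to bound the cross terms arising in the Lyapunov functional of the delay system.
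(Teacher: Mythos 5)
Your proof is correct: the completing-the-square argument via $u=\sqrt{\varepsilon}\,x-\tfrac{1}{\sqrt{\varepsilon}}\,y$ and $u^{T}u\ge 0$ is the standard derivation of this inequality, and your remarks on where $\varepsilon>0$ is used and on the equality case are accurate. The paper itself gives no proof — it cites the lemma from a reference — so there is nothing to diverge from; your argument is exactly the expected one.
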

\begin{lemma}\label{lemma4}
	\cite{TAN20081010} For real symmetric matrixes $U>0, V \geq 0$ , there exists
	\begin{equation*}\label{dfsdfds}
	\begin{array}{l}
	U > V \Leftrightarrow {\lambda _{\max }}(V{U^{ - 1}}) < 1 \Leftrightarrow {\lambda _{\max }}({U^{ - \frac{1}{2}}}V{U^{ - \frac{1}{2}}}) < 1.
	\end{array}
	\end{equation*}
\end{lemma}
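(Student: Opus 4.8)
The plan is to prove Lemma~\ref{lemma4} by elementary linear algebra, using the positive definite square root of $U$ as the common vehicle linking all three conditions. Since $U$ is symmetric and positive definite, it has a unique symmetric positive definite square root $U^{1/2}$, and $U^{-1/2}:=(U^{1/2})^{-1}$ is again symmetric and positive definite. Everything reduces to tracking how $U^{-1/2}$ transforms $V$.

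First I would establish $U>V \Leftrightarrow \lambda_{\max}(U^{-1/2}VU^{-1/2})<1$. Because $U^{-1/2}$ is invertible and symmetric, the congruence $X\mapsto U^{-1/2}XU^{-1/2}$ preserves positive definiteness in both directions, so
\begin{equation*}
U-V>0 \quad\Leftrightarrow\quad U^{-1/2}(U-V)U^{-1/2}>0 \quad\Leftrightarrow\quad I-U^{-1/2}VU^{-1/2}>0 .
\end{equation*}
The matrix $M:=U^{-1/2}VU^{-1/2}$ is symmetric (as $V$ is) and positive semidefinite, since $x^{T}Mx=(U^{-1/2}x)^{T}V(U^{-1/2}x)\ge 0$; hence it is orthogonally diagonalizable with eigenvalues in $[0,\infty)$. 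For such $M$ one has $I-M>0$ precisely when every eigenvalue of $M$ is strictly below $1$, i.e. $\lambda_{\max}(M)<1$. This gives the first equivalence.

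Next I would show $\lambda_{\max}(VU^{-1})=\lambda_{\max}(U^{-1/2}VU^{-1/2})$, which yields the remaining equivalence at once. The key identity is the similarity
\begin{equation*}
U^{-1/2}\left(VU^{-1}\right)U^{1/2}=U^{-1/2}VU^{-1}U^{1/2}=U^{-1/2}VU^{-1/2}=M ,
\end{equation*}
so $VU^{-1}$ and $M$ are similar and therefore share the same characteristic polynomial and spectrum; in particular $VU^{-1}$, although not symmetric in general, has only real nonnegative eigenvalues, and its largest equals $\lambda_{\max}(M)$. Chaining the two steps closes the loop $U>V \Leftrightarrow \lambda_{\max}(VU^{-1})<1 \Leftrightarrow \lambda_{\max}(U^{-1/2}VU^{-1/2})<1$. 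The computations are entirely routine; the only point requiring care is that $\lambda_{\max}(VU^{-1})$ is well defined for this generally non-symmetric product, which is exactly what the similarity to the symmetric matrix $M$ supplies, so I anticipate no real obstacle. Since the statement is quoted from \cite{TAN20081010}, it may alternatively be invoked directly.
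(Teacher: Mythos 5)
The paper gives no proof of this lemma at all---it is simply quoted from the cited reference---so there is nothing to compare against except the statement itself. Your argument is correct and complete: the congruence $X\mapsto U^{-1/2}XU^{-1/2}$ step, the observation that $M=U^{-1/2}VU^{-1/2}$ is symmetric positive semidefinite so that $I-M>0$ is equivalent to $\lambda_{\max}(M)<1$, and the similarity $U^{-1/2}(VU^{-1})U^{1/2}=M$ identifying the spectra of $VU^{-1}$ and $M$ are all standard and correctly executed, and you rightly flag that the similarity is what makes $\lambda_{\max}(VU^{-1})$ meaningful for a non-symmetric product. This supplies a self-contained justification where the paper relies on a citation.
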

\begin{theorem}
	Suppose fractional-order nonlinear system
	\begin{equation}\label{E4}
	{}_{{t_0}}D_t^\alpha X(t) = AX(t) + BX(t - \tau ) + F(X(t)),
	\end{equation}
	where $0<\alpha<1$, $x(t) \in {\mathbb{R}^n}$ is the state vector of the system; $A,B \in {\mathbb{R}^{n \times n}}$ are constant matrixes; time delay $\tau \geq 0$.
	The nonlinear part satisfies $F(0)=0$, and $F(\cdot)$ is the higher-order term of $(\cdot)$, 
	i.e. $\mathop {\lim }\limits_{\left\| X \right\| \to 0} \dfrac{{\left\| {F(X)} \right\|}}{{\left\| X \right\|}} = 0$.
	
	If there exits two symmetric and positive definite matrixes $P,Q$ satisfying the equality and inequality below
	\begin{equation}\label{E5}
	PA + {A^T}P + 2Q = 0,{\kern 1pt} {\kern 1pt} {\kern 1pt} {\kern 1pt} {\kern 1pt} {\kern 1pt} {\kern 1pt} {\kern 1pt}
	\end{equation}
	\begin{equation}\label{E6}
	\left\| {PB} \right\| < {\lambda _{\min }}(Q),{\kern 1pt} {\kern 1pt} {\kern 1pt} {\kern 1pt} {\kern 1pt} {\kern 1pt} {\kern 1pt} {\kern 1pt}
	\end{equation}
	then the fractional system (\ref{E4}) is asymptotically stable.
\end{theorem}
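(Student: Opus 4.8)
The plan is to prove the theorem by the fractional Lyapunov direct method with the quadratic candidate $V(t)=X^{T}(t)PX(t)$, which is positive definite and radially unbounded because $P$ is symmetric positive definite. First I would differentiate $V$ along the solutions of (\ref{E4}) in the Caputo sense and invoke Lemma~\ref{lemma2} to move the fractional derivative inside the quadratic form:
\[
{}_{t_0}D_t^{\alpha}V(t)\le 2X^{T}(t)P\,{}_{t_0}D_t^{\alpha}X(t)=2X^{T}PAX+2X^{T}PBX(t-\tau)+2X^{T}PF(X(t)).
\]
The Lyapunov equation (\ref{E5}) turns the linear drift term into $X^{T}(PA+A^{T}P)X=-2X^{T}QX$, so after this substitution the estimate reads ${}_{t_0}D_t^{\alpha}V(t)\le -2X^{T}QX+2X^{T}PBX(t-\tau)+2X^{T}PF(X)$, and the problem reduces to absorbing the delayed linear term and the nonlinear remainder into the negative definite term $-X^{T}QX$ near the origin.

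For the delayed term I would apply Lemma~\ref{lemma3} (Young's inequality) with the weight matrix $Q$, writing $2X^{T}PBX(t-\tau)\le\theta^{-1}X^{T}PBQ^{-1}(PB)^{T}X+\theta X^{T}(t-\tau)QX(t-\tau)$ for a parameter $\theta\in(0,1)$ to be fixed. Here the hypothesis (\ref{E6}) is decisive: $\|PB\|<\lambda_{\min}(Q)$ forces, via Lemma~\ref{lemma4}, the matrix inequality $PBQ^{-1}(PB)^{T}\le\mu Q$ with $\mu=\|PB\|^{2}/\lambda_{\min}(Q)^{2}<1$, so that $-2X^{T}QX+\theta^{-1}X^{T}PBQ^{-1}(PB)^{T}X\le -(2-\theta^{-1}\mu)X^{T}QX$, and $\theta$ can be chosen in $(1-\sqrt{1-\mu},1)$ to keep the coefficient $2-\theta^{-1}\mu$ strictly larger than the delayed coefficient $\theta$. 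For the nonlinear part I would use $F(0)=0$ together with $\lim_{\|X\|\to0}\|F(X)\|/\|X\|=0$ to guarantee that on a sufficiently small ball around the origin $2X^{T}PF(X)\le\eta\|X\|^{2}$ with $\eta$ arbitrarily small, hence absorbable into the same negative term. Collecting the estimates, I expect to reach a scalar fractional differential inequality of the form ${}_{t_0}D_t^{\alpha}V(t)\le -k_{1}V(t)+k_{2}\sup_{s\in[t-\tau,t]}V(s)$ (or its pointwise variant with $V(t-\tau)$) with constants $k_{1}>k_{2}>0$.

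The last step — and the one I expect to be the main obstacle — is to deduce asymptotic decay of $V$, hence asymptotic stability of (\ref{E4}), from this delayed fractional differential inequality; this is not automatic, since the usual Lyapunov--Krasovskii device of adding an integral term $\int_{t-\tau}^{t}X^{T}(s)QX(s)\,\mathrm{d}s$ to $V$ does not behave well under Caputo differentiation. I would close the argument either by a fractional Razumikhin-type argument (restricting attention to the instants where $V(s)\le V(t)$ for all $s\in[t-\tau,t]$, at which the inequality forces ${}_{t_0}D_t^{\alpha}V(t)\le -(k_{1}-k_{2})V(t)<0$) or, equivalently, by a fractional comparison / Halanay-type inequality applied to the scalar function $V$, both of which turn $k_{1}>k_{2}$ into $V(t)\to0$ as $t\to\infty$. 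Verifying the applicability of such a fractional delay comparison principle, and checking that the neighbourhood on which the nonlinear remainder is controlled does not depend on the Razumikhin threshold, are the technical points requiring the most care.
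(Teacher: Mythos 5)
Your core estimates coincide with the paper's: Lemma~\ref{lemma2} to pass the Caputo derivative inside the quadratic form, the Lyapunov equation (\ref{E5}) to produce the $-2X^{T}QX$ drift, a $Q$-weighted Young inequality (Lemma~\ref{lemma3}) for the delayed cross term, condition (\ref{E6}) combined with the spectral-norm estimate and Lemma~\ref{lemma4} to force $PBQ^{-1}B^{T}P<\eta Q$ with $\eta<1$, and the higher-order hypothesis on $F$ to absorb $2X^{T}PF(X)$ on a small ball. Where you genuinely diverge is in how the delay is handled. You keep the plain candidate $V=X^{T}PX$, take its Caputo $\alpha$-derivative, and therefore land on a delayed fractional differential inequality that you must close with a fractional Razumikhin or Halanay-type comparison. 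The paper instead sidesteps the very incompatibility you correctly identify between the Krasovskii integral and Caputo differentiation: it takes the functional $V(t)={}_{t_0}D_t^{\alpha-1}\bigl(x^{T}(t)Px(t)\bigr)+\int_{t-\tau}^{t}x^{T}(s)Qx(s)\,\mathrm{d}s$, i.e.\ the fractional integral of order $1-\alpha$ of the quadratic form plus the usual Krasovskii term, and differentiates it once in the \emph{ordinary} sense. That ordinary derivative is exactly ${}_{t_0}D_t^{\alpha}(x^{T}Px)+x^{T}Qx-x^{T}(t-\tau)Qx(t-\tau)$, after which the same estimates yield $\dot V(t)\le x^{T}(M_1+\sigma I)x+x^{T}(t-\tau)(M_2+\sigma I)x(t-\tau)$ with both matrices negative definite, and no delayed comparison principle is needed at all.

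The cost of your route is precisely the step you flag as the main obstacle: the fractional Razumikhin/Halanay closure is a nontrivial theorem in its own right, is not among the lemmas available in the paper, and as written your argument is incomplete until you either prove such a principle or cite a precise version and verify its hypotheses (in particular that $k_1>k_2>0$ suffices and that the neighbourhood on which the nonlinearity is controlled can be chosen independently of the Razumikhin threshold). To be fair, the paper's own finish --- inferring asymptotic stability of $x$ from negative definiteness of $\dot V$ for a functional whose positive part is a fractional integral of $x^{T}Px$ rather than $x^{T}Px$ itself --- is asserted rather than argued, so neither route is airtight at the last step; but the paper's choice of functional at least reduces the problem to a classical-looking Lyapunov--Krasovskii inequality, whereas yours leaves a genuinely fractional delay comparison problem open. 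If you want to salvage your plan without importing a Halanay lemma, the cleanest fix is to adopt the paper's hybrid functional.
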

\begin{proof}
		We construct Lyapunov function \\
	\begin{equation*}\label{51}
	V(t) = {}_{{t_0}}^CD_t^\alpha ({x^T}(t)Px(t)) + \int_{t - \tau }^t {{x^T}(s)Qx(s)} ds.
	\end{equation*}
	By lemma \ref{lemma2}, we obtain that the derivative of functional $V(t)$ along the trajectory of system (\ref{E4}) satisfies
	\begin{align}\label{52}
	\dot{V}(t) &= {}_{{t_0}}^CD_t^1\left\{ {{}_{{t_0}}^CD_t^{\alpha  - 1}({x^T}(t)Px(t)) + \int_{t - \tau }^t {{x^T}(s)Qx(s)} ds} \right\}
	\vspace{5pt}\nonumber\\
	& = {}_{{t_0}}^CD_t^\alpha ({x^T}(t)Px(t)) + {x^T}(t)Qx(t) - {x^T}(t - \tau )Qx(t - \tau )
	\vspace{5pt}\nonumber\\
	& \le 2{x^T}(t)P{}_{{t_0}}^CD_t^\alpha {x^T}(t) + {x^T}(t)Qx(t) - {x^T}(t - \tau )Qx(t - \tau )
	\vspace{5pt}\nonumber\\
	& = {x^T}(t)(PA + {A^T}P + Q)x(t) + 2{x^T}(t)PBx(t - \tau ) + 2{x^T}(t)PF(x(t)) - {x^T}(t - \tau )Qx(t - \tau ).
	\end{align}
	From lemma \ref{lemma3}, we have
	\begin{equation*}\label{53}
	\begin{split}
	2{x^T}(t)PBx(t - \tau ) &= 2{x^T}(t)PB{Q^{ - \frac{1}{2}}}{Q^{\frac{1}{2}}}x(t - \tau )
	\vspace{5pt}\notag\\
	&\le \frac{1}{\alpha }{x^T}(t)PB{Q^{ - 1}}{B^T}Px(t) + \alpha {x^T}(t - \tau )Qx(t - \tau ),\\
	2{x^T}(t)PF(x(t)) &\le \frac{1}{\beta }{x^T}(t){P^2}x(t) + \beta F_{}^T(x(t))F(x(t)),\notag
	\end{split}
	\end{equation*}
	where $\alpha , \beta$ are arbitrary positive constants. Thus, by eq. (\ref{E5}) and eq. (\ref{52}) there is
	\begin{equation*}\label{55}
	\begin{array}{l}
	\dot{V}(t)\le {x^T}(t)(\dfrac{1}{\alpha }{x^T}(t)PB{Q^{ - 1}}{B^T}P - Q + (\dfrac{1}{\beta }){P^2})x(t)
	+ \beta F_{}^T(x(t))F(x(t)) + (\alpha  - 1){x^T}(t - \tau )Qx(t - \tau ).
	\end{array}
	\end{equation*}
	On account of the definition of spectral norm of matrix, we attain
	\begin{equation*}\label{56}
	\begin{array}{l}
	{[{\lambda _{\max }}({Q^{ - \frac{1}{2}}}PB{Q^{ - 1}}{B^T}P{Q^{ - \frac{1}{2}}})]^{\frac{1}{2}}}
	\vspace{5pt}\\
	= \left\| {{Q^{ - \frac{1}{2}}}PB{Q^{ - \frac{1}{2}}}} \right\|
	\vspace{5pt}\\
	\le {\left\| {{Q^{ - \frac{1}{2}}}} \right\|^2}\left\| {PB} \right\|
	\vspace{5pt}\\
	\le \dfrac{1}{{{\lambda _{\min }}(Q)}}\left\| {PB} \right\|.
	\end{array}
	\end{equation*}
	By inequality (\ref{E6}), we get
	\begin{equation*}\label{57}
	{\lambda _{\max }}({Q^{ - \frac{1}{2}}}PB{Q^{ - 1}}{B^T}P{Q^{ - \frac{1}{2}}}) < 1
	\vspace{5pt}.
	\end{equation*}
	Therefore there exists $\eta>0$ , such that ${\lambda _{\max }}({Q^{ - \frac{1}{2}}}PB{Q^{ - 1}}{B^T}P{Q^{ - \frac{1}{2}}}) < \eta  < 1.$
	Because $P > 0,B{Q^{ - 1}}{B^T} \ge 0$ , from lemma \ref{lemma4}, we have $PB{Q^{ - 1}}{B^T}P < \eta Q$.
	For any $\alpha<1$ that meets $0 < \frac{\eta }{\alpha } < 1$, there is $\frac{1}{\alpha }PB{Q^{ - 1}}{B^T}P - Q < (\frac{\eta }{\alpha } - 1)Q < 0.$
	We can select suitable positive constant $\beta $  which satisfies
	\begin{equation}\label{58}
	{M_1}: = \frac{1}{\alpha }PB{Q^{ - 1}}{B^T}P - Q + \frac{1}{\beta }{P^2} < 0.
	\end{equation}
	Note that $\alpha<1 $, whereupon
	\begin{equation}\label{59}
	{M_2}{\rm{: = }}(\alpha  - 1)Q < 0,
	\end{equation}
	consequently
	\begin{equation}\label{510}
	\dot{V}(t) \le {x^T}(t){M_1}x(t) + {x^T}(t - \tau ){M_2}x(t - \tau ) + \beta {\left\| {F(x(t))} \right\|^2}.
	\end{equation}
	From eq. (\ref{58}) and eq. (\ref{59}) , we can take $\sigma >0$ so that 
	\begin{equation}\label{511}
	{M_1} + \sigma I < 0,{M_2} + \sigma I < 0.
	\end{equation}
	In that $F(\cdot)$ is the higher-order term of $(\cdot)$, there exists $\delta>0$ satisfies the condition that when $\left\| {x(t)} \right\| < \delta , t \ge {t_0}$, the inequality
	\begin{equation}\label{512}
	{\left\| {F(x(t))} \right\|^2} \le \frac{\sigma }{\beta }{\left\| {x(t)} \right\|^2}
	\end{equation}
	holds.\\
	Plugging (\ref{512}) into (\ref{510}) yields
	\begin{equation}\label{513}
	\dot{V}(t) \le {x^T}(t)({M_1} + \sigma I)x(t) + {x^T}(t - \tau )({M_2} + \sigma I)x(t - \tau ).
	\end{equation}
	\vspace{5pt}
	By (\ref{511}), $\dot{V}$ is negative definite, therefore the fractional-order system (\ref{E4}) is asymptotically stable.
\end{proof} 
\section{Conclusion}
In view of the existing integer-order HBV models' deficiency in depicting system dynamical properties, we proposed a fractional hepatitis B virus model with immune delay based on Caputo fractional derivative. We discussed the stability of disease equilibria, and explored their immune mechanism. Through theoretical analysis and numerical simulation, the rich dynamic behavior of the system under the influence of time delay and fractional-order were demonstrated. By investigating the impact of time delay in the course of immunization, we selected an essential time delay parameter, then applied classical chaos research methods to study the chaotic characteristics of the fractional-order system. Time sequence maps, phase diagrams, Poincar\'e sections, and power spectrums were derived to describe the chaotic state with different parameter values. Also, we utilized bifurcation diagrams for time delay and fractional-order to analyze their effect on the fractional delay system. Furthermore, motivated by the asymptotic stability criteria for Riemann Liouville fractional-order nonlinear systems, we give an asymptotic stability theorem of the Caputo fractional-order nonlinear autonomous system with time delay. Numerical simulation show that in the short term HBV viral load increases with the immune delay while the fractional HBV system undergo period-doubling bifurcation to chaos. Soon after, the system went through a steady state and then evolved into chaotic state again. This result accords with the common clinical situations of HBV recurrence after the first time it is cured. Thereby it provides clinical guidance that medical care personnel should pay close attention to patients who just cured of HBV to prevent relapse. And in order to avoid those who get acute infected develop chronic Hepatitis B, medical care personnel should consolidate the medication to stabilize the condition of the patients. In addition, the patients ought to exercise properly to enhance their physical fitness and immune function, have regular physical checkups to inspect their health condition, and last but not least stay away from tobacco and alcohol and develop healthy habits. For people who are not infected by HBV, an effective means to protect from HBV is to be vaccinated. Besides, it is also necessary to develop good hygiene habits and do not share personal items such as toothbrushes, towels, and razors. Have routine medical examination to learn about the health status of the body, and start treatment immediately once noticed abnormal signals. Our work actually forms a good correspondence with clinical facts. This article provides a significant theoretical basis for the study of the immune mechanism and clinical treatment of hepatitis B.

\begin{acknowledgement}
	To corresponding author Prof. Dr. Fei GAO we sincerely present this acknowledgement in token of gratitude for his instructions and patience. This work is supported by the Fundamental Research Funds for the Central Universities of China, the Self-determined and Innovative Research Funds of WUT, China (Grant No. 2018IB017), the State Key Program of the National Natural Science of China (Grant No. 91324201), the National innovation and entrepreneurship training program for college students (WUT No. 201910497192).
\end{acknowledgement} 
%
%
%

%
%
\bibliographystyle{unsrt}
\bibliography{References}

\end{document}